\documentclass[times,11pt, onecolumn, draftclsnofoot]{IEEEtran}
\usepackage{color}
\usepackage[dvips]{epsfig}
\usepackage{relsize}
\usepackage{amsmath, amsthm, amssymb}
\usepackage{multirow}
\usepackage{dsfont}
\usepackage{algorithmic}
\usepackage{algorithm}
\usepackage{multirow}
\usepackage{url}

\newtheorem{theorem}{Theorem}[section]
\newtheorem{lemma}[theorem]{Lemma}

\newcommand{\snr}{\text{\relscale{.8}\textsf{SNR}}}

\begin{document}
%\title{Cooperative Spectrum Sensing in Cognitive Radio Networks: Cost Optimization}
\title{\huge{Spectrum Sensing in Cognitive Radio Networks: Performance Evaluation and Optimization}\footnote{This work was done when the first author was at Lehigh University. This work was supported by NSF Grant CNS-0721445 and CNS-0721433.}}
\author{\IEEEauthorblockN{\normalsize {Gang Xiong}\IEEEauthorrefmark{1}, \normalsize{Shalinee Kishore}\IEEEauthorrefmark{2} \normalsize{and Aylin Yener}\IEEEauthorrefmark{3}\\
\IEEEauthorblockA{\IEEEauthorrefmark{1}\normalsize{Intel Corporation, Hillsboro, OR 97124} \\ \normalsize{Email: gang.xiong@intel.com}} \\
\IEEEauthorblockA{\IEEEauthorrefmark{2}\normalsize{Electrical and Computer Engineering, Lehigh University, Bethlehem, PA 18015} \\ \normalsize{Email: skishore@lehigh.edu}} \\
\IEEEauthorblockA{\IEEEauthorrefmark{3}\normalsize{Electrical Engineering, Pennsylvania State University, University Park, PA 16802} \\ \normalsize{Email: yener@ee.psu.edu}}\vspace{-0.20in}}}
\maketitle
%\vspace{-0.1in}
\begin{abstract}
This paper studies cooperative spectrum sensing in cognitive radio networks where secondary users collect local energy statistics and report their findings to a secondary base station, i.e., a fusion center. First, the average error probability is quantitively analyzed to capture the dynamic nature of both observation and fusion channels, assuming fixed amplifier gains for relaying local statistics to the fusion center. Second, the system level overhead of cooperative spectrum sensing is addressed by considering both the local processing cost and the transmission cost. Local processing cost incorporates the overhead of sample collection and energy calculation that must be conducted by each secondary user; the transmission cost accounts for the overhead of forwarding the energy statistic computed at each secondary user to the fusion center. Results show that when jointly designing the number of collected energy samples and transmission amplifier gains, only {\em one} secondary user needs to be actively engaged in spectrum sensing. Furthermore, when number of energy samples or amplifier gains are fixed, closed form expressions for optimal solutions are derived and a generalized water-filling algorithm is provided.
\end{abstract}
\vspace{-0.15in}
%-----------------------------------------
\section{Introduction}
%-----------------------------------------

To alleviate inefficient allocation of radio frequency (RF) spectrum, cognitive radios have recently been proposed to coexist with primary (or licensed) users of spectral bands while not causing harmful interference \cite{Haykin_2005}\cite{Ne_2011}.  Current proposals for secondary networks require cognitive users to conduct {\em spectrum sensing} so that they can detect unused spectral bands and avoid interfering with a primary system.  To improve detection reliability in fading conditions, multiple secondary users can cooperate in spectrum sensing and take advantage of spatial diversity \cite{Quan_2008}\cite{Ma_2009}. 

In secondary networks where users communicate with a local secondary base station as illustrated in Fig.~\ref{fig:Coop_Specturm}, the system level performance and design of cooperative spectrum sensing must 1) account for the dynamic nature of both the observation and fusion channels, i.e., the channel between the secondary and primary users and the channel between the secondary user and the secondary base station, respectively; and 2) balance the gains offered by spectrum sensing against its computational and transmission costs.  In this paper, we address both these concerns in evaluating and designing spectrum sensing schemes for secondary networks.  
\vspace{-0.1in}
\begin{figure}[htb]
 \centerline{\epsfxsize 2.5in\epsfbox{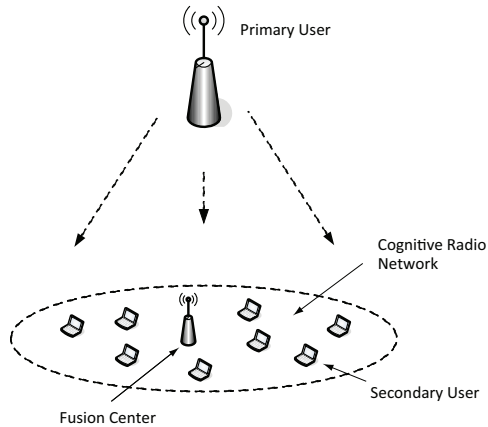}}
 \caption {Topology of cooperative spectrum sensing in cognitive radio networks.}\label{fig:Coop_Specturm} \vspace{-0.1in}
\end{figure}

%Prior studies have examined cooperative spectrum sensing under the model described in Fig.~\ref{fig:Coop_Specturm}.  
In \cite{Gha_2005}, a logic OR fusion rule for hard-decision combining was presented to cooperatively detect the primary user. The AF cooperative strategy was used in \cite{Gane_2007} to improve spectrum agility and allow two secondary users to communicate with each other. An optimal linear detector for cooperative spectrum sensing was proposed in \cite{Quan_2008}, where the received signals at the fusion center were optimally weighted for global fusion. In \cite{Unni_2008}, a linear quadratic fusion rule based on a detection criterion was proposed for spectrum sensing by modeling received signals as correlated log-normal random variables. Based on our knowledge, these and other prior studies
do not focus on system-level performance of cooperative spectrum sensing that accounts for the
dynamic nature of {\em both} the observation and fusion channels. %In contrast to these studies, we then focus on the system-level performance of cooperative spectrum sensing that account for the dynamic nature of {\em both} the observation and fusion channels. 

Low-energy overhead cooperative spectrum sensing was studied in \cite{Zhang_2009}. Optimally allocated powers were computed without taking into account the underlying system level cost of sensing. Our work on energy-constrained spectrum sensing is motivated by \cite{Appad_2005}, where detection problems accounted for constraints on expected cost due to transmission and measurement. We build on these formulations here to design energy-constrained cooperative spectrum sensing. 

In our system model, secondary users forward local energy statistics to a secondary base station using amplify-and-forward (AF) over parallel access channels.  We first address the impact of dynamic observation and fusion channels by analyzing the average error probability for cooperative spectrum sensing considering both additive white Gaussian noise (AWGN) and Rayleigh fading conditions.  Results show that detection performance can be maintained in the low and moderate fusion signal-to-noise ratio (SNR) regimes when fusion channels are reliable, whereas fading on the secondary users' observation channels provide spatial diversity. %where as fading observation channels of the secondary users can provide spatial diversity.

Next, we address the {\em system level} energy cost of sensing by considering two major factors:  Local processing cost due to sample collection and local energy calculation and transmission cost due to forwarding local statistics to the fusion center. We present two optimization problems to find the number of energy samples that must be collected at each secondary user and the appropriate amplifier gain that each secondary user must use for AF relaying of the local energy statistic.  When jointly optimizing both the number of samples and amplifier gains, we show that only {\em one} secondary user must be actively engaged in spectrum sensing.  When either the amplifier gains or the number of samples is fixed, we find closed-form optimal solutions and propose a generalized water-filling approach to energy-constrained cooperative spectrum sensing.

The remainder of the paper is organized as follows: Section II describes our system model. Section III presents the average error probability for various observation and fusion channel conditions. Sections IV and V collectively present our results for energy-constrained spectrum sensing: Section IV addresses the optimization for minimization of global error probability while Section V provides the optimization for minimization of system level cost. Simulation results are presented in Section VI and we conclude the paper in Section VII.

In this paper, we use the following notation: column vectors are denoted by boldface lowercase letters, i.e.,
$\boldsymbol{x} = [x_1, x_2, \cdots, x_{n}]^{\texttt{T}}$ and $x_i$ is the $i$th entry of $\boldsymbol{x}$. $\boldsymbol{0} = [0, 0, \cdots, 0]^\texttt{T}$ and $\boldsymbol{1} = [1, 1, \cdots, 1]^\texttt{T}$. $\mathbf{I}$ is the identity matrix. $(\cdot)^{\texttt{T}}$ and $(\cdot)^{\dagger}$ denote the transpose and conjugate transpose operation, respectively. $\|\boldsymbol{x}\|$ denotes the $\ell_2$ norm of $\boldsymbol{x}$. $\boldsymbol{x} \succeq \boldsymbol{0}$ denotes the generalized inequality, i.e., $x_i \geq 0$. $\mathcal{Z}_{+}^n$ and $\mathcal{R}_{+}^n$ denote the set of nonnegative integer and real $n$-vectors, respectively. $|\mathcal{S}|$ denotes the cardinality of a set $\mathcal{S}$. %$\mathbb{E}\{\cdot\}$ and $\mathbb{V}\textrm{ar}\{\cdot\}$ denote the expected and variance value, respectively. 
$\lceil \cdot \rceil$ and $\lfloor \cdot \rfloor$ denote the ceiling and floor operations, respectively.

%\vspace{-0.15in}
%-----------------------------------------
\section{System Model}
%-----------------------------------------
%-----------------------------------------
\subsection{Communication Model}
%-----------------------------------------

We consider a network model in Fig.~\ref{fig:Coop_Specturm}, where secondary user conducts local spectrum sensing and   
transmits its local energy statistic to the fusion center using AF on parallel access channels (PAC). The received signal for secondary user $i$ at the fusion center is shown in  Fig.~{\ref{fig:PAC_CSS_AWGN}}, i.e.,
\begin{flalign}\label{eq:fusion_model}
y_i = {g_i h_i} x_i +v_i,
\end{flalign}
where $x_i$ is the energy of received signal at the secondary user $i$; ${g}_i$ is the amplifier gain for the secondary user $i$; $h_i$ is the channel gain between secondary user $i$ and the fusion center and $v_i$ is independent and identically distributed (i.i.d.) Gaussian noise, i.e., $v_i \sim \mathcal{{CN}}(0, \sigma_v^2)$ and is independent of $x_i$. We assume that $h_i$ is known at
the fusion center (e.g., via channel estimation) and remains
constant during the sensing period. %It is interesting to note that we may utilize beamforming strategy as an alternate approach of data transmission for the secondary users. In this scheme, the secondary user forwards the local energy to the fusion center simultaneously with beamforming weights. By assigning optimal beamforming weight for each secondary user, we note that this stratergy can lead to the same optimal fusion rule as PAC channels for cooperative spectrum sensing. However, due to practical implementation issues (i.e., phase and timing synchronization), we focus out study on PAC channels in this paper. 
We can then rewrite (\ref{eq:fusion_model}) in a matrix form as
\begin{flalign}
\boldsymbol{y} = \mathbf{H} \boldsymbol{x} + \boldsymbol{v},
\end{flalign}
where $\mathbf{H} = \textrm{diag}\{{g_1 h_{1}}, {g_2 h_{2}},\cdots, {g_{n} h_{n}} \} $.

\begin{figure}[htb]
 \centerline{\epsfxsize 4.5in\epsfbox{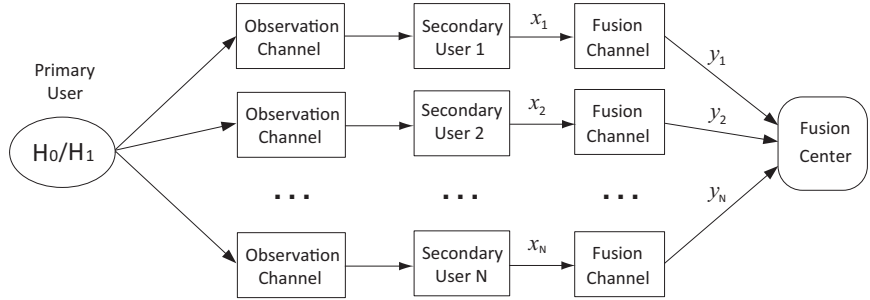}}
 \caption {Cooperative spectrum sensing in cognitive radio networks.}\label{fig:PAC_CSS_AWGN}\vspace{-0.2in}
\end{figure}

\vspace{-0.15in}

%-----------------------------------------
\subsection{Local Energy Statistic}
%-----------------------------------------

For secondary user $i,~(1 \leq i \leq n)$, the hypothesis test for $x_i$ is given as
\begin{equation}
\left\{
\begin{array}{ll}
\mathcal{H}_0: & x_i = (1/\kappa_i)\sum_{k=1}^{\kappa_i} |n_i(k)|^2  \\
\mathcal{H}_1: & x_i = (1/\kappa_i)\sum_{k=1}^{\kappa_i} |\tilde{h}_i s(k)+n_i(k)|^2,
\end{array} \right.
\end{equation}
where $\kappa_i$ is the number of samples, $s(k)$ is the transmitted signal from the primary user and $n_i(k)$ is the noise received by secondary user $i$. We assume $s(k)$ is complex PSK modulated and i.i.d. with mean zero and variance $\sigma_s^2$; $\tilde{h}_i$ is the channel gain between the primary user and secondary user $i$ and is assumed to be constant during the cooperative spectrum sensing period; and $n_i(k)$ is i.i.d. Gaussian noise with mean zero and variance $\sigma_n^2$ and is independent of $s(k)$. We define the local received SNR at the secondary user $i$ as
$
\gamma_{i} = \sigma_s^2|\tilde{h}_i|^2/\sigma_n^2.
$ 
When $\kappa_i$ is large, $x_i$ can be approximated as Gaussian random variable \cite{Quan_2008}, i.e.,
\begin{equation}\label{eq:system_model}
\left\{
\begin{array}{ll}
\mathcal{H}_0: &  x_i \sim \mathcal{{N}} ( \sigma_n^2, ~\sigma_n^4/\kappa_i)  \\
\mathcal{H}_1: &  x_i \sim \mathcal{{N}} ((1+\gamma_{i})\sigma_n^2 , ~(1+2\gamma_{i}) \sigma_n^4/\kappa_i ).
\end{array} \right.
\end{equation}
We assume here the local received SNR $\gamma_i$ is known at secondary user $i$. In IEEE 802.22, $\gamma_i$ can be estimated from pilot signals periodically transmitted by primary users \cite{Quan_2009}\footnote{Uncertainty in the knowledge of local received SNR would affect the design of cooperative spectrum sensing. We will investigate this important issue in the future.}.

Given this system model, we see that
%\begin{flalign*}
%\xi_i \overset{\underset{\mathrm{def}}{}}{=}~&\mathbb{E}\{x_i^2\} \\ 
%=~&  \pi_0 (1+1/\kappa_i)\sigma_n^4 + \pi_1 \left[(1+\gamma_i)^2 + (1+2\gamma_i)/\kappa_i\right] \sigma_n^4 \\
%=~& \left[ 1+ 1/\kappa_i + \pi_1 \left(\gamma_i + 2\left(1+1/\kappa_i\right) \right)\gamma_i\right]  \sigma_n^4,
%\end{flalign*}
$
\xi_i \overset{\underset{\mathrm{def}}{}}{=}\mathbb{E}\{x_i^2\} = \left[ 1+ 1/\kappa_i + \pi_1 \left(\gamma_i + 2\left(1+1/\kappa_i\right) \right)\gamma_i\right]  \sigma_n^4,
$
where $\pi_0 = \textrm{P}(\mathcal{H}_0)$ and $\pi_1 = \textrm{P}(\mathcal{H}_1)$ are the probabilities that spectrum is idle and occupied, respectively.
In cognitive radio networks, the received primary user power measured by the secondary user is expected to be very small \cite{Sah_2004}, i.e., $\gamma_i \ll 1$. Additionally, the number of samples is expected to be more than a few, i.e., $\kappa_i \gg 1$. Thus, we can approximate the transmitted power for the secondary user $i$ as
$
\mathcal{P}_i = \xi_i g_i^2  \simeq g_i^2 (1+2\pi_1 \gamma_i) \sigma_n^4.
$

%-----------------------------------------
\subsection{Optimal Fusion Rule}
%-----------------------------------------

Under hypothesis $\mathcal{H}_0$ and $\mathcal{H}_1$, the received signal $\boldsymbol{y} $ has a Gaussian distribution, i.e.,
\begin{equation}\label{eq:y_dist}
\left\{
\begin{array}{ll}
\mathcal{H}_0: &  \boldsymbol{y} \sim \mathcal{{N}} \left( \mathbf{H}\boldsymbol{1} \sigma_n^2, ~ \mathbf{\Sigma}_0\right)  \\
\mathcal{H}_1: &  \boldsymbol{y} \sim \mathcal{{N}} \left(\mathbf{H}(\boldsymbol{1}+\boldsymbol{\gamma})\sigma_n^2, ~  \mathbf{\Sigma}_1\right),
\end{array} \right.
\end{equation}
where $\mathbf{\Sigma}_0 = \mathbf{H} \mathbf{S}\mathbf{H}^{\dagger} \sigma_n^4 +  \sigma_v^2 \mathbf{I}$ and $\mathbf{\Sigma}_1 = \mathbf{H} \mathbf{S}(\mathbf{I}+2\mathbf{\Gamma})\mathbf{H}^{\dagger}  \sigma_n^4 +  \sigma_v^2 \mathbf{I}$, here, $\mathbf{\Gamma} = \textrm{diag}\{\gamma_1, \gamma_2, \cdots, \gamma_{n} \}$ and $\mathbf{S} = \textrm{diag}\{1/\kappa_1, 1/\kappa_2, \cdots, 1/\kappa_{n} \}$.  
Without loss of generality, assume that $\pi_0 = \pi_1 = 0.5$. Then, optimal (maximum a posteriori probability) likelihood ratio test (LRT) is given as:%\footnote{We note that we can reach same optimization formulation by using Neyman-Pearson criterion to maximize global detection probability. Here we present the global error probability for the sake of simplicity.}:
 \begin{flalign}
\log \frac{p(\boldsymbol{y}|\mathcal{H}_1)}{p(\boldsymbol{y}|\mathcal{H}_0)} ~\mathop{\gtrless}_{\mathcal{H}_0}^{\mathcal{H}_1}~ 0.
\end{flalign}

Since $\gamma_i \ll 1$ and  $\kappa_i \gg 1$, then, $\gamma_i/\kappa_i \approx 0$ and we have $\mathbf{\Sigma}_0  \approx \mathbf{\Sigma}_1$. Thus, the optimal LRT can be approximated as
 \begin{flalign}
 \mathcal{T}(\boldsymbol{y}) ~=~ (\mathbf{H}\boldsymbol{\gamma})^{\dagger} \mathbf{\Sigma}_0^{-1}\boldsymbol{y}~\mathop{\gtrless}_{\mathcal{H}_0}^{\mathcal{H}_1}~ \tau,
\end{flalign}
where $\tau \!=\! (\mathbf{H}\boldsymbol{\gamma})^{\dagger} \mathbf{\Sigma}_0^{-1}\mathbf{H}(\boldsymbol{1}+0.5\boldsymbol{\gamma})\sigma_n^2$. Furthermore, we note that
$\mathbb{E}\{\mathcal{T}(\boldsymbol{y})|\mathcal{H}_0\} ~=~ (\mathbf{H}\boldsymbol{\gamma})^{\dagger} \mathbf{\Sigma}_0^{-1}\mathbf{H}\boldsymbol{1} \sigma_n^2$, $\mathbb{E}\{\mathcal{T}(\boldsymbol{y})|\mathcal{H}_1\} ~=~ (\mathbf{H}\boldsymbol{\gamma})^{\dagger} \mathbf{\Sigma}_0^{-1}\mathbf{H}(\boldsymbol{1}+\boldsymbol{\gamma})\sigma_n^2$ and $\mathbb{V}\textrm{ar} \{\mathcal{T}(\boldsymbol{y})|\mathcal{H}_0\}  ~=~ \mathbb{V}\textrm{ar} \{\mathcal{T}(\boldsymbol{y})|\mathcal{H}_1\}  ~=~ (\mathbf{H}\boldsymbol{\gamma})^{\dagger} \mathbf{\Sigma}_0^{-1}\mathbf{H}\boldsymbol{\gamma}$.
%\begin{equation*}
%\left\{
%\begin{array}{l}
%\mathbb{E}\{\mathcal{T}(\boldsymbol{y})|\mathcal{H}_0\} ~=~ (\mathbf{H}\boldsymbol{\gamma})^{\dagger} \mathbf{\Sigma}_0^{-1}\mathbf{H}\boldsymbol{1} \sigma_n^2 \\
%\mathbb{E}\{\mathcal{T}(\boldsymbol{y})|\mathcal{H}_1\} ~=~ (\mathbf{H}\boldsymbol{\gamma})^{\dagger} \mathbf{\Sigma}_0^{-1}\mathbf{H}(\boldsymbol{1}+\boldsymbol{\gamma})\sigma_n^2,
%\end{array} \right.
%\end{equation*}
%and 
%\[
%\mathbb{V}\textrm{ar} \{\mathcal{T}(\boldsymbol{y})|\mathcal{H}_0\}  ~=~ \mathbb{V}\textrm{ar} \{\mathcal{T}(\boldsymbol{y})|\mathcal{H}_1\}  ~=~ (\mathbf{H}\boldsymbol{\gamma})^{\dagger} \mathbf{\Sigma}_0^{-1}\mathbf{H}\boldsymbol{\gamma}.
%\]

With this preparation, it can be shown that the error probability is given as\footnote{It is worth mentioning that we can reach same optimization formulation by using Neyman-Pearson criterion to maximize global detection probability. Here we present the global error probability for the sake of simplicity.}
%\begin{flalign}\label{eq:error_prob_given_h}
% \textrm{P}_{e} ~=~& \pi_0  \textrm{P}_{f} + \pi_1  \textrm{P}_{m} \nonumber \\
% ~=~& Q \left( \frac{\sigma_n^2}{2}\left[ (H\boldsymbol{\gamma})^{\dagger} \mathbf{\Sigma}_0^{-1}H\boldsymbol{\gamma} \right]^{1/2} \right) \nonumber \\
% ~=~& Q \left( \frac{1}{2}\sqrt{\mathcal{F}( \boldsymbol{\kappa}, \boldsymbol{g}) } \right),
%\end{flalign}
\begin{flalign}\label{eq:error_prob_given_h}
 \textrm{P}_{e} ~=~ \pi_0  \textrm{P}_{f} + \pi_1  \textrm{P}_{m}  ~=~ Q \left( \frac{1}{2}\bigg(\sum_{i=1}^{n} \frac{   g_i^2 \kappa_i \gamma_i^2 |h_{i}|^2 }{g_i^2 |h_{i}|^2 + \kappa_i\tilde{\sigma}_v^2} \bigg)^{1/2}\right),
\end{flalign}
where $\tilde{\sigma}_v^2 = \sigma_v^2/\sigma_n^4$ and $
Q(x) = \frac{1}{\sqrt{2 \pi}} \int_x^{\infty} \exp (-t^2/2) \textrm{d}t.
$
%and 
%\[
%\mathcal{F}( \boldsymbol{\kappa}, \boldsymbol{g}) =  \sum_{i=1}^{n} \frac{   g_i^2 \kappa_i \gamma_i^2 |h_{i}|^2 }{g_i^2 |h_{i}|^2 + \kappa_i\tilde{\sigma}_v^2},
%\]
%where . 
%From (\ref{eq:error_prob_given_h}), we note that the error probability is a decreasing function of the number of samples or amplifier gain for secondary user $i$, i.e., when $\kappa_i$ or $g_i$ increases, the error probability $\textrm{P}_{e}$ decreases.  
It is also easy to see that the asymptotic error probability expressions when the number of samples or amplifier gains approach infinity are given by
\begin{flalign}\label{eq:asym_kappa}
 \textrm{P}_{e}({\kappa}_{\infty}) ~\overset{\underset{\mathrm{def}}{}}{=}~  \underset{{\kappa_i \rightarrow \infty}}{\lim} \textrm{P}_{e}  ~=~  Q \left( \frac{1}{2 \tilde{\sigma}_v}\bigg( \sum_{i=1}^{n}  {g}_i^2 \gamma_i^2 |h_{i}|^2 \bigg)^{1/2} \right),
\end{flalign}
and
\begin{flalign}\label{eq:asym_g}
 \textrm{P}_{e}({g}_{\infty})~ \overset{\underset{\mathrm{def}}{}}{=} ~\underset{{g_i \rightarrow \infty}}{\lim} \textrm{P}_{e} ~=~ Q \left( \frac{1}{2 }\bigg({ \sum_{i=1}^{n}\kappa_i  \gamma_i^2 \bigg)^{1/2}} \right),
\end{flalign}
respectively.% It is interesting to note that when $\kappa_i \rightarrow \infty$, the asymptotic error probability is a function of local received SNR and fusion channel gains; while when $g_i \rightarrow \infty$, the asymptotic error probability is only a function of local received SNR.

%-----------------------------------------
\subsection{System Level Cost for Cooperative Spectrum Sensing}
%-----------------------------------------

In this paper, we consider system level cost for cooperative spectrum sensing in cognitive radio networks. This system level cost has contributions from three components: Local processing; transmission; and reporting and broadcasting.
\begin{itemize}
	\item Local processing cost includes the energy consumed by the secondary user in receiver RF scanning and local energy calculation. For simplicity, we assume that the local processing cost $\mathcal{C}_{pi}(\cdot)$ for secondary user $i$ is a linear function of the number of samples \cite{Hoang_2009}, i.e.,
$
\mathcal{C}_{pi}(\kappa_i) = c_0 \kappa_i,
$
where $c_0$ is the local processing cost per sample.
\item Transmission cost is the transmit power required from a secondary user to
transmit the local calculated energy to the fusion center.
Here, we assume that this cost for secondary user $i$ is given as
%
%
%
%Transmission cost is the cost for the secondary user to transmit the local calculated energy to the fusion centers. Here, we assume that the transmission cost $\mathcal{C}_{ti}(\cdot)$ for secondary user $i$ is equal to the transmitted power, i.e.,
$
\mathcal{C}_{ti}(g_i) = \mathcal{P}_i = \xi_i g_i^2.
$
\item %For optimal system design, fusion center needs to know the local received SNR for each secondary user. In practice, this can be realized for secondary users to report their local received SNRs to the fusion center. The fusion center then determines the resource allocated to each secondary user, and broadcasts it to the secondary users. In this paper, we assume that total reporting and broadcasting cost $\mathcal{C}_{rb}$ is fixed, and thus can be eliminated from the system level cost.
For optimal system design, the fusion center needs to know
the local received SNR for each secondary user. In
practice, this means that secondary users will
report their local received SNRs to the fusion center. The
fusion center then determines optimal allocations (number of samples and/or amplifier gains) to each secondary user and then broadcasts them to all secondary
users. In this paper, we assume that this total reporting and
broadcasting cost $\mathcal{C}_{rb}$ is fixed; thus we do not consider it in the optimization problem.
\end{itemize}

The system level cost during the cooperative spectrum sensing (aside from $\mathcal{C}_{rb}$) is given as
\begin{flalign*}
\mathcal{C}(\boldsymbol{\kappa}, \boldsymbol{g}) ~=~  \sum_{i=1}^{n} \mathcal{C}_{pi}(\kappa_i) +   \sum_{i=1}^{n} \mathcal{C}_{ti}(g_i) ~=~  \sum_{i=1}^{n} \left(c_0\kappa_i +   \xi_i g_i^2 \right).
\end{flalign*}

%-----------------------------------------
\section{Average Error Probability }\label{sec:avg_error_prob}
%-----------------------------------------

%In this section, we provide the average error probability for spectrum sensing.  We assume 
We assume in this section that the amplifier gains and the number of samples collected at each secondary user are fixed and not adjusted according to the channel gains. We will discuss adapting amplifier gains and number of samples in the subsequent sections.
%%-----------------------------------------
%\subsection{Exact Average Error Probability}%\label{sec:avg_error_prob}
%%-----------------------------------------
From (\ref{eq:error_prob_given_h}), we see that the average error probability can be calculated as
 \begin{flalign}\label{eq:avg_error_prob}
 \textrm{P}_{e, \textrm{avg}}  ~=~  \mathbb{{E}}_{\boldsymbol{\gamma}, \boldsymbol{h}} \left\{\textrm{P}_{e|\boldsymbol{\gamma}, \boldsymbol{h}} \right\}.
\end{flalign}

To simplify the calculation
of the average error probability, we consider the following alternate
expression for the Q function \cite{Alouini_1999}
$
Q(x) = \frac{1}{\pi} \int_{0}^{\pi/2} \exp\Big(-\frac{x^2}{ 2\sin^2 \phi} \Big)\textrm{d}\phi,~x \geq 0.
$
When the local received SNRs $\gamma_i$ and fusion channel gains $|h_{i}|^2$ are independent, respectively, we can simplify the average error probability in (\ref{eq:avg_error_prob}) as 
 \begin{flalign} 
 \textrm{P}_{e, \textrm{avg}}  ~=~  \frac{1}{\pi} \int_{0}^{\pi/2} \prod_{i=1}^{n} \mathcal{B}_i(\phi)\textrm{d}\phi,
\end{flalign}
where 
$
\mathcal{B}_i(\phi) = \int_{0}^{\infty}\int_{0}^{\infty}\exp\left(\frac{\mathcal{A}_i(s, t)}{\sin^2 \phi} \right) p_{\gamma_i}(s)p_{|h_{i}|^2}(t)\textrm{d}s \textrm{d} t
$
and
$
\mathcal{A}_i(s, t) = - \frac{ 1}{ 8}\cdot \frac{g_i^2 \kappa_i s^2 t}{g_i^2 t + \kappa_i \tilde{\sigma}_v^2}.
$
Here, $p_{\gamma_i}(s) $ and $p_{|h_{i}|^2}(t)$ are PDFs of $\gamma_i$ and $|h_{i}|^2$, respectively. If we further assume that $g_i = g$, $\kappa_i = \kappa$, $\gamma_i$ and $|h_{i}|^2$ are i.i.d., respectively, i.e., $p_{\gamma_i}(s) = p_{\gamma}(s)$ and $p_{|h_{i}|^2}(t) = p_{|h|^2}(t)$, then we have $\mathcal{B}_i(\phi) = \mathcal{B}(\phi),~\forall i$. In this case, the average error probability in (\ref{eq:avg_error_prob}) reduces to 
 \begin{flalign}\label{eq:avg_simplify}
 \textrm{P}_{e, \textrm{avg}}  ~=~  \frac{1}{\pi} \int_{0}^{\pi/2} \left[ \mathcal{B} (\phi) \right]^{n} \textrm{d}\phi.
\end{flalign}

Based on this, we see that ${ \textrm{P}}_{e, \textrm{avg}} $ is a decreasing function of $n$, which indicates that in a power unconstrained cognitive radio network, global error performance can be improved by increasing the number of secondary users. This statement follows since $\mathcal{A}(s, t) \leq 0$ and $ \mathcal{B} (\phi) \leq \int_{0}^{\infty}\int_{0}^{\infty} p_{\gamma}(s)p_{|h|^2}(t)\textrm{d}s \textrm{d} t = 1$.
In general, a closed-form expression of $ \textrm{P}_{e, \textrm{avg}}$ is difficult to obtain. However, only elementary functions, such as exponential and $Q(\cdot)$, are involved in the integral calculation; the average error probability can thus readily be found numerically. 

%%-----------------------------------------
%\subsection{Upper Bound for Average Error Probability}%\label{sec:avg_error_prob}
%%-----------------------------------------
 \emph{Remark}: To gain more insight, we investigate an upper bound for
average error probability. Since $Q(x) \leq \frac{1}{2}\exp (-x^2/2)$, the upper bound can be obtained as
 \begin{flalign} 
 \tilde{ \textrm{P}}_{e, \textrm{avg}}  ~=~   \frac{1}{2}   \prod_{i=1}^{n} \mathcal{M}_i,
\end{flalign}
where 
$
 \mathcal{M}_i = \int_{0}^{\infty}\int_{0}^{\infty}\exp[{\mathcal{A}_i(s, t)} ] p_{\gamma_i}(s)p_{|h_{i}|^2}(t)\textrm{d}s \textrm{d} t.
$
Assume for simplicity $g_i = g,~\kappa_i = \kappa $, $\gamma_i$ and $|h_{i}|^2$ are i.i.d., respectively. Then, we have $ \mathcal{M}_i =  \mathcal{M},~\forall i $, and 
 \begin{flalign}\label{eq:up_simplify}
\tilde{ \textrm{P}}_{e, \textrm{avg}}  ~= ~ \frac{1}{2}  \mathcal{M}^{n}.
\end{flalign}
It is readily evident that when $g \rightarrow 0$, $\tilde{ \textrm{P}}_{e, \textrm{avg}} \rightarrow \frac{1}{2}$. This is not surprising since when the amplifier gains are low, the fusion center will not be able to make a global decision due to the lack of local energy statistic. %In the later case, we denote the asymptotic upper bounds for three scenarios as $\tilde{ \textrm{P}}_{e, \textrm{avg}}^{(1)}(g_\infty)$, $\tilde{ \textrm{P}}_{e, \textrm{avg}}^{(2)}(g_\infty)$ and $\tilde{ \textrm{P}}_{e, \textrm{avg}}^{(3)}(g_\infty)$, respectively.
%%-----------------------------------------
%\subsection{Closed-Form Expressions for Three Scenarios}%\label{sec:avg_error_prob}
%%-----------------------------------------
Next, we use (\ref{eq:avg_simplify}) and (\ref{eq:up_simplify}) to evaluate the average error probability for cooperative spectrum sensing for the three channel scenarios shown in Table \ref{tb1:3_scenarios}.

\begin{table} [h] \caption{Three Channel Environments for Performance Evaluation} \label{tb1:3_scenarios}
\vspace{-0.15in}
\begin{center}
\begin{tabular}{c|c|c}  \hline
& Observation channels  & Fusion channels   \\   \hline
Channel Environment I & AWGN   & Rayleigh fading   \\  \hline
Channel Environment II & Rayleigh fading   & AWGN  \\  \hline
Channel Environment III & Rayleigh fading  & Rayleigh fading  \\  \hline
\end{tabular} 
\end{center} 
\end{table} 
\vspace{-0.15in}

%%-----------------------------------------
\subsubsection{Channel Environment I}
%%----------------------------------------- 

In this scenario, $\gamma_i = \bar{\gamma}$ and $p_{|h_{i}|^2}(t) =  \exp(- t)$ since the observation channel is AWGN and the fusion channel is exponential Rayleigh fading. After some manipulations, we have
\[
\mathcal{B}(\phi) = \exp\left(-\frac{\kappa \bar{\gamma}^2}{8 \sin^2 \phi} \right) {\Psi}_1\left(\frac{\kappa \bar{\gamma}^2}{8 \sin^2 \phi}, \frac{\kappa\tilde{\sigma}_v^2}{g^2} \right),
\]
where 
$
\Psi_1\left(a,b\right) = \int_{0}^{\infty} \exp\big(-x + \frac{ab}{x+b} \big)\textrm{d}x,~(a,b>0).
$
After calculating $\mathcal{B}(\phi)$, we substitute it in (\ref{eq:avg_simplify}) to obtain the average error probability.
It is interesting to note that a similar definition of $\Psi_1\left(\phi,a,b\right)$ can be found in \cite{Gane_2007}. 
Furthermore, the upper bound is given as
\[
\tilde{ \textrm{P}}_{e, \textrm{avg}}^{(1)} =   \frac{1}{2} \exp\left( -\frac{n\kappa \bar{\gamma}^2 }{8}\right) \left[ {\Psi}_1\Big(\frac{\kappa \bar{\gamma}^2}{8 }, \frac{\kappa\tilde{\sigma}_v^2}{g^2} \Big) \right]^n.
\]
When $g\rightarrow \infty$, we see that $\tilde{ \textrm{P}}_{e, \textrm{avg}}^{(1)}(g_\infty) = \frac{1}{2} \exp\left( -\frac{n\kappa \bar{\gamma}^2 }{8}\right)$. This indicates that when the fusion channel is perfect, average error performance is limited by local observed energy statistic. 

%%-----------------------------------------
\subsubsection{Channel Environment II}\label{sec:avg_caseII}
%%----------------------------------------- 

In this scenario, $p_{\gamma_i}(s) =  \frac{1}{\bar{\gamma}} \exp(- \frac{s}{\bar{\gamma}})$ and $h_{i} = 1$.
After some manipulations, (using eq.(3.322.2) in \cite{Integrals}), we obtain
\[
\mathcal{B}(\phi) = \sqrt{8 \pi c}\sin \phi  \exp\left(2 c \sin^2 \phi\right) Q\left(2\sqrt{c} \sin \phi\right),
\]
where 
$
c \! =\! \frac{1}{\bar{\gamma}^2} \Big(\frac{1}{\kappa} + \frac{\tilde{\sigma}_v^2}{g^2 }\Big).
$
Furthermore, the upper bound is
\[
\tilde{ \textrm{P}}_{e, \textrm{avg}}^{(2)}\! =\!   \frac{1}{2}\big({8 \pi c}\big)^{n/2}  \exp\big(2 n c \big) \big[Q\left(2\sqrt{c}\,  \right)\big]^{n}.
\]
When $g\rightarrow \infty$, we see that $c \rightarrow 1/(\kappa\bar{\gamma}^2)$ and 
$
\textstyle \tilde{ \textrm{P}}_{e, \textrm{avg}}^{(2)}(g_\infty) = \frac{1}{2}\left(\frac{8 \pi}{\kappa \bar{\gamma}^2}\right)^{n/2}  \exp\left( \frac{2n }{\kappa \bar{\gamma}^2} \right) \left[Q\left(\frac{2}{\bar{\gamma} \sqrt{\kappa}}  \right)\right]^{n}.
$
Again, we see that the average error performance is limited by local observed energy statistic when $g\rightarrow \infty$.

%%-----------------------------------------
\subsubsection{Channel Environment III}
%%----------------------------------------- 

In this scenario, $p_{\gamma_i}(s) =  \frac{1}{\bar{\gamma}} \exp(- \frac{s}{\bar{\gamma}})$ and $p_{|h_{i}|^2}(t) =  \exp(- t)$. After some manipulations, we have
\[
\mathcal{B}(\phi) = \sqrt{8\pi} \exp\left(\frac{2\sin^2 \phi}{\kappa \bar{\gamma}^2 } \right) {\Psi}_2\left( \frac{\sin^2 \phi}{\kappa \bar{\gamma}^2}, \frac{\tilde{\sigma}_v^2 \sin^2 \phi}{g^2  \bar{\gamma}^2}\right),
\]
where 
$
 \Psi_2\left( a,b\right) = \int_{0}^{\infty}\left({a + \frac{b}{x} }\right)^{1/2}\exp\left(-x +  \frac{2b }{x}\right)   Q\left(2 \left(a+ \frac{b}{x} \right)^{1/2}\right) \textrm{d}x,~(a,b>0).
%\Psi_2\left( a,b\right) = \int_{0}^{\infty}\Big({a + \frac{b}{x} }\Big)^{1/2}\exp\Big(-x +  \frac{2b }{x}\Big)   Q\Big[2 \big(a+ \frac{b}{x} \big)^{1/2}\Big] \textrm{d}x,~(a,b>0).
$
Furthermore, the upper bound is
\[
\tilde{ \textrm{P}}_{e, \textrm{avg}}^{(3)} =    \frac{1}{2} \big({8 \pi}\big)^{n/2} \exp\left( \frac{2 n }{\kappa \bar{\gamma}^2} \right)\left[ {\Psi}_2\Big(\frac{1}{\kappa \bar{\gamma}^2}, \frac{\tilde{\sigma}_v^2}{g^2 \bar{\gamma}^2} \Big) \right]^n.
\]
When $g\rightarrow \infty$, we see that 
$
\textstyle \tilde{ \textrm{P}}_{e, \textrm{avg}}^{(3)}(g_\infty) = \textstyle \tilde{ \textrm{P}}_{e, \textrm{avg}}^{(2)}(g_\infty).
$ 
This is primarily due to the fact that when $g\rightarrow \infty$, the fusion channel no longer impacts the average error performance. 

%-----------------------------------------
\section{Optimization: Minimization of Error Probability}\label{sec:primal}
%-----------------------------------------

In this section, we aim to minimize the error probability for the system model in Fig.~\ref{fig:PAC_CSS_AWGN} subject
to a system level cost constraint of sensing.  Specifically, we determine the appropriate number
of samples and amplifier gains for each secondary user and consider the following two scenarios for this
optimization problem:

%In this section, we aim to minimize the error probability for the system model in Fig.~\ref{fig:PAC_CSS_AWGN}
%subject to a constraint on the cost of sensing. Specifically, we consider the following two scenarios 
%for this optimization problem:

\begin{enumerate}
	\item \textbf{Scenario A:} First, we consider the system level cost constraint. Hence, the optimization problem is formulated as:
 \begin{flalign}\label{eq:pro_formu}
\underset{\boldsymbol{\kappa}, \boldsymbol{g}}{\min}  ~~~& \textrm{P}_{e}(\boldsymbol{\kappa}, \boldsymbol{g})
\nonumber  \\  \textrm{s.t.}  ~~~& \mathcal{C}(\boldsymbol{\kappa}, \boldsymbol{g}) \leq \bar{\mathcal{C}}, ~\boldsymbol{\kappa}\in \mathcal{Z}_{+}^n, ~\boldsymbol{g}\in \mathcal{R}_{+}^n,
\end{flalign}
where $\bar{\mathcal{C}} $ is the system level cost constraint. Here we denote the optimal solution of (\ref{eq:pro_formu}) as $\big(\kappa_{p,i}^{(\textrm{opt}, 1)}, g_{p,i}^{(\textrm{opt}, 1)}\big)$ and the minimum error probability as $\textrm{P}_{e}^{(\textrm{opt},1)}$.
	\item \textbf{Scenario B:} In some applications, local sample collection for each secondary user may be scheduled in a fixed time slot. This indicates the number of samples is upper bounded by a maximum value ${\kappa}_{\max}$. Furthermore, the transmission power for each secondary user may be required to be below a predefined power limit $\mathcal{P}_{\max}$. By incorporating these additional individual constraints imposed on each secondary user, we can model the optimization problem as 
 \begin{flalign}\label{eq:pro_formu_ind}
\underset{\boldsymbol{\kappa}, \boldsymbol{g}}{\min}  ~~~& \textrm{P}_{e}(\boldsymbol{\kappa}, \boldsymbol{g})
\nonumber  \\  \textrm{s.t.}  ~~~& \mathcal{C}(\boldsymbol{\kappa}, \boldsymbol{g}) \leq \bar{\mathcal{C}},~\boldsymbol{\kappa}\in \mathcal{Z}_{+}^n, ~\boldsymbol{g}\in \mathcal{R}_{+}^n,  \nonumber
  \\   ~~~&   \boldsymbol{\kappa}\preceq {\kappa}_{\max}\boldsymbol{1},~\xi_i g_i^2 \leq \mathcal{P}_{\max}.
\end{flalign}
 \end{enumerate}

To better understand the optimal resource allocation for cooperative spectrum sensing, we consider the following two cases in Scenarios A and B as illustrated in   Table \ref{tb2:4_scenarios}: joint optimization of $\boldsymbol{\kappa}$ and $\boldsymbol{g}$; and optimization of either $\boldsymbol{\kappa}$ or $\boldsymbol{g}$. 
\begin{table} [h] \caption{Optimization Problems for Cooperative Spectrum Sensing} \label{tb2:4_scenarios}
\vspace{-0.15in}
\begin{center}
\begin{tabular}{c|p{5cm}|p{5cm}}  \hline
& ~~~~~~~~~~~~~~~~\textbf{Case I}  & ~~~~~~~~~~~~~~~~\textbf{Case II}  \\   \hline
\textbf{Scenario A} & joint optimization of $\boldsymbol{\kappa}$ and $\boldsymbol{g}$ with system level cost constraint  &  optimization of either $\boldsymbol{\kappa}$ or $\boldsymbol{g}$ with system level cost constraint  \\  \hline
\textbf{Scenario B} & joint optimization of $\boldsymbol{\kappa}$ and $\boldsymbol{g}$ with system level and individual constraints  &  optimization of either $\boldsymbol{\kappa}$ or $\boldsymbol{g}$ with system level and individual constraints  \\  \hline
\end{tabular} 
\end{center} 
\end{table} 
\vspace{-0.25in}

%-----------------------------------------
\subsection{Case I: Joint Optimization of $\boldsymbol{\kappa}$ and $\boldsymbol{g}$}\label{sec:pri_case_III}
%-----------------------------------------

%-----------------------------------------
\subsubsection{Scenario A}\label{sec:caseIII_scenarioI}
%-----------------------------------------

In this case, we consider the optimization in (\ref{eq:pro_formu}) over both $\boldsymbol{\kappa}$ and $\boldsymbol{g}$. We note that (\ref{eq:pro_formu}) is a mixed integer nonlinear optimization problem (MINLP). In general, there is no polynomial-time algorithm for solving general MINLPs \cite{IP_Book}. A potentially clearer insight into the solutions can be obtained by considering a convex relaxation for this optimization problem, where we simply relaxed the integer constraint of the number of samples:
 \begin{flalign}\label{eq:pro_formu_relax}
\underset{\boldsymbol{\kappa}, \boldsymbol{g}}{\min}  ~~~& \textrm{P}_{e}(\boldsymbol{\kappa}, \boldsymbol{g})
\nonumber  \\  \textrm{s.t.}  ~~~& \mathcal{C}(\boldsymbol{\kappa}, \boldsymbol{g}) \leq \bar{\mathcal{C}}, ~\boldsymbol{\kappa}\in \mathcal{R}_{+}^n, ~\boldsymbol{g}\in \mathcal{R}_{+}^n.
\end{flalign}
%We will show in our simulations that the performance gap between (\ref{eq:pro_formu}) and (\ref{eq:pro_formu_relax}) is very small. Hence, we will focus on (\ref{eq:pro_formu_relax}) as our optimization problem in the following analysis.
As shown in the Appendix A, (\ref{eq:pro_formu_relax}) is a convex problem. 
%\end{lemma} 
Thus, it can be solved efficiently using interior-point methods or other iterative methods \cite{Convex}. This will be a recurring theme in the optimization problems we consider in the sequel. In the numerical results, we shall see that the approximation as detailed below results in near optimal performance without the curse of complexity. Given this convex optimization problem, first we introduce the following lemma.
\begin{lemma}\label{lemma:u_v_0_same}
Optimal solution of $(\boldsymbol{\kappa}, \boldsymbol{g})$ in (\ref{eq:pro_formu_relax}) should satisfy either 1) $\kappa_i > 0 $ and $g_i > 0 $, or 2) $\kappa_i = 0 $ and $g_i = 0 $ for secondary user $i$. 
\end{lemma}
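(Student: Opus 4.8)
The plan is to recast the objective in terms of the quantity inside the $Q$-function and then argue by an exchange (reallocation) argument. Since $Q(\cdot)$ is strictly decreasing on $[0,\infty)$, minimizing $\textrm{P}_{e}(\boldsymbol{\kappa},\boldsymbol{g})$ in (\ref{eq:pro_formu_relax}) is equivalent to maximizing
\[
f(\boldsymbol{\kappa},\boldsymbol{g}) = \sum_{i=1}^{n} t_i, \qquad t_i = \frac{g_i^2 \kappa_i \gamma_i^2 |h_i|^2}{g_i^2 |h_i|^2 + \kappa_i \tilde{\sigma}_v^2},
\]
over the same feasible set. The first fact I would record is that each summand $t_i$ vanishes as soon as either $\kappa_i=0$ or $g_i=0$, since the numerator carries a factor of both $\kappa_i$ and $g_i^2$, whereas $t_i>0$ strictly whenever $\kappa_i>0$ and $g_i>0$ (using $\gamma_i>0$ and $|h_i|>0$). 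Thus a user with exactly one of its two resources set to zero contributes nothing to the detection metric yet still consumes budget, which is the intuition the lemma formalizes.

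I would then proceed by contradiction. Suppose an optimal $(\boldsymbol{\kappa},\boldsymbol{g})$ has, for some $i$, $\kappa_i=0$ and $g_i>0$ (the symmetric case $g_i=0$, $\kappa_i>0$ is handled identically). The cost charged to user $i$ is then $\xi_i g_i^2>0$ while $t_i=0$, so this budget is idle. I would construct a feasible perturbation that keeps the total cost unchanged but converts the idle budget into detection gain: pick any $g_i'\in(0,g_i)$ and set $\kappa_i'=\xi_i\bigl(g_i^2-(g_i')^2\bigr)/c_0>0$, leaving every other user's allocation fixed. By construction $c_0\kappa_i'+\xi_i(g_i')^2=\xi_i g_i^2$, so the constraint $\mathcal{C}(\boldsymbol{\kappa},\boldsymbol{g})\le\bar{\mathcal{C}}$ still holds, and the relaxed integer domain $\boldsymbol{\kappa}\in\mathcal{R}_{+}^n$ accommodates the (possibly non-integer) $\kappa_i'$. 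Now both resources of user $i$ are strictly positive, so its term jumps from $0$ to a strictly positive value; hence $f$ increases and $\textrm{P}_{e}$ strictly decreases, contradicting optimality. The same split applied to the $c_0\kappa_i$ budget settles the other case.

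The one point that requires care, and what I would flag as the only real obstacle, is the nondegeneracy needed to guarantee \emph{strict} improvement: the argument relies on $\gamma_i>0$ and $|h_i|>0$ so that the reallocated term is strictly positive, and on $c_0>0$, $\xi_i>0$ so that a positive budget can always be split into two strictly positive pieces. These hold under the standing assumptions that an actively engaged user has nonzero observation SNR and a nonzero fusion channel, and that both cost coefficients are positive. Given the convexity established in Appendix A, an alternative route is to read the claim off the KKT stationarity and complementary-slackness conditions associated with the nonnegativity constraints; however, the exchange argument above is more transparent and avoids a case analysis of the multipliers, so I would present that version.
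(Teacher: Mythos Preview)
Your argument is correct. Both you and the paper proceed by contradiction after observing that the $i$th summand vanishes whenever $\kappa_i g_i = 0$, so a user with exactly one nonzero resource wastes budget. The difference lies in the perturbation used to derive the contradiction. The paper sets \emph{both} $\kappa_i$ and $g_i$ to zero, thereby enlarging the remaining budget from $\bar{\mathcal{C}}'=\bar{\mathcal{C}}-\xi_i g_i^2$ (or $\bar{\mathcal{C}}-c_0\kappa_i$) back to $\bar{\mathcal{C}}$, and then argues that the optimal value of the residual $(n{-}1)$-user problem strictly increases with the budget. Your approach instead keeps the total cost fixed and redistributes user $i$'s own budget across its two resources, so that $t_i$ jumps from $0$ to a strictly positive value with no need to reason about the other users at all. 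Your route is slightly more direct and constructive: it exhibits an explicit improving feasible point rather than appealing to monotonicity of an auxiliary optimal value in the constraint level, and it works even when $n=1$ without modification. The paper's route, on the other hand, makes the connection to the broader optimization structure (feasible-set relaxation) more visible. Either is acceptable; the nondegeneracy caveats you flag ($\gamma_i>0$, $|h_i|>0$, $c_0,\xi_i>0$) are exactly what is needed for strict improvement in both versions.
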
 
\begin{proof}
Please see the Appendix B.
\end{proof}

This lemma is not surprising because when one secondary user does not collect the energy samples, it will not have anything to transmit to the fusion center. Similarly, when one secondary user decides not to transmit the data to the fusion center, it is reasonable to expect that this secondary user should remain inactive and not collect local energy samples. 
Using Lemma \ref{lemma:u_v_0_same}, the optimal solution of $(\boldsymbol{\kappa}, \boldsymbol{g})$ can be found as stated in the following theorem.

\begin{theorem}\label{thm:caseIII_sol}
Consider the optimization problem in (\ref{eq:pro_formu_relax}), let us define 
$
 \rho_i = \frac{\gamma_i^2 |h_{i}|^2}{(\tilde{\sigma}_v \sqrt{\xi_i} + |h_{i}|\sqrt{c_0}\,  )^2}
$ 
and assume $\rho_1 \geq \rho_2 \geq \cdots \geq \rho_{n}$. Then, the optimal solution of $(\boldsymbol{\kappa}, \boldsymbol{g})$ is
\begin{flalign}\label{eq:opt_k_z_case_III}
\kappa_{p,i}^{(\emph{opt},2)} ~=~& \left\{
\begin{array}{cl}
   \frac{|h_{i}| \bar{\mathcal{C}}}{\tilde{\sigma}_v \sqrt{\xi_i c_0} + |h_{i}| c_0} , & ~~~~~~~i = 1  \\
0, &~~~~~~~i > 1,
\end{array} \right. \nonumber \\
g_{p,i}^{(\emph{opt},2)} ~=~& \left\{
\begin{array}{cl}
 \Big(\frac{\tilde{\sigma}_v  \bar{\mathcal{C}}}{\tilde{\sigma}_v  \xi_i +|h_{i}| \sqrt{\xi_i c_0} }\Big)^{1/2}, & ~i = 1  \\
0, &~i > 1.
\end{array} \right.
\end{flalign}
\end{theorem}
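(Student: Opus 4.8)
The plan is to exploit the monotonicity of the $Q$ function to turn the problem into a maximization, and then to decompose it into a per-user inner problem and a budget-allocation outer problem. Since $Q(\cdot)$ is strictly decreasing, minimizing $\textrm{P}_{e}$ in (\ref{eq:pro_formu_relax}) is equivalent to maximizing the argument
\[
f(\boldsymbol{\kappa},\boldsymbol{g}) ~=~ \sum_{i=1}^{n} \frac{g_i^2 \kappa_i \gamma_i^2 |h_{i}|^2}{g_i^2 |h_{i}|^2 + \kappa_i \tilde{\sigma}_v^2}
\]
subject to $\sum_{i=1}^{n}\left(c_0 \kappa_i + \xi_i g_i^2\right) \leq \bar{\mathcal{C}}$. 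By Lemma \ref{lemma:u_v_0_same}, every secondary user is either fully active ($\kappa_i,g_i>0$) or fully inactive, so I would introduce the per-user budget $B_i = c_0 \kappa_i + \xi_i g_i^2 \geq 0$ with $\sum_{i=1}^{n} B_i \leq \bar{\mathcal{C}}$ and split the optimization accordingly.

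Next I would solve the inner problem: for a fixed budget $B_i$, maximize the $i$th summand of $f$. Rewriting $\frac{g_i^2 \kappa_i}{g_i^2 |h_{i}|^2 + \kappa_i \tilde{\sigma}_v^2} = \left(\frac{|h_{i}|^2}{\kappa_i} + \frac{\tilde{\sigma}_v^2}{g_i^2}\right)^{-1}$ converts the inner maximization into the minimization of $\frac{|h_{i}|^2}{\kappa_i} + \frac{\tilde{\sigma}_v^2}{g_i^2}$ under $c_0 \kappa_i + \xi_i g_i^2 = B_i$. The crucial step is a Cauchy--Schwarz inequality,
\[
\left(\frac{|h_{i}|^2}{\kappa_i} + \frac{\tilde{\sigma}_v^2}{g_i^2}\right)\left(c_0 \kappa_i + \xi_i g_i^2\right) \geq \left(|h_{i}|\sqrt{c_0} + \tilde{\sigma}_v\sqrt{\xi_i}\right)^2,
\]
which yields the maximized per-user contribution $f_i^{\star}(B_i) = \rho_i B_i$, with exactly the $\rho_i$ defined in the theorem, the bound being attained when $\frac{|h_{i}|}{\sqrt{c_0}\,\kappa_i} = \frac{\tilde{\sigma}_v}{\sqrt{\xi_i}\, g_i^2}$.

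Since each inner optimum is linear in its budget, the outer problem collapses to the linear program $\max \sum_{i=1}^{n} \rho_i B_i$ subject to $\sum_{i=1}^{n} B_i \leq \bar{\mathcal{C}}$ and $B_i \geq 0$, whose optimal vertex places the entire budget on the user with the largest slope, i.e., $B_1 = \bar{\mathcal{C}}$ and $B_i = 0$ for $i>1$ under the ordering $\rho_1 \geq \cdots \geq \rho_{n}$. Finally I would substitute $B_1 = \bar{\mathcal{C}}$ into the equality condition together with $c_0 \kappa_1 + \xi_1 g_1^2 = \bar{\mathcal{C}}$, obtaining a $2\times 2$ system whose solution is precisely the closed forms in (\ref{eq:opt_k_z_case_III}); the convexity established in Appendix A then certifies that this stationary point is the global optimum.

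The main obstacle I anticipate is the inner problem, namely choosing the correct Cauchy--Schwarz pairing so that the per-user maximum emerges exactly linear in $B_i$ with slope $\rho_i$. Once that linearity is established, the reduction to a trivial linear program and the \emph{activate-only-one-user} conclusion follow almost immediately, and the remaining algebra to recover $\kappa_1$ and $g_1$ is routine.
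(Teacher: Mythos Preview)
Your argument is correct, but it follows a genuinely different route from the paper's. The paper works through the KKT conditions of the equivalent problem (\ref{eq:pro_formu_g_k_1}): from the two stationarity equations it extracts the proportionality $z_i=\omega_i\kappa_i$ with $\omega_i=\sqrt{c_0 p_i/(q_i\xi_i)}$, substitutes this back to obtain the linear program $\max\;\boldsymbol{s}_1^{\texttt{T}}\boldsymbol{\kappa}$ subject to $\boldsymbol{s}_2^{\texttt{T}}\boldsymbol{\kappa}\le\bar{\mathcal{C}}$, and then observes that the optimum lies at a vertex, so only one $\kappa_i$ is nonzero. Your approach instead splits off a per-user budget $B_i$ and uses Cauchy--Schwarz to solve the inner problem in closed form, producing the linear map $B_i\mapsto\rho_iB_i$ directly; the outer LP is then the same one the paper reaches. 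Both proofs land on an identical linear program, and your equality condition $|h_i|\sqrt{\xi_i}\,g_i^2=\tilde{\sigma}_v\sqrt{c_0}\,\kappa_i$ is exactly the paper's $z_i=\omega_i\kappa_i$. What your approach buys is that it is more elementary and self-contained: the Cauchy--Schwarz bound is tight and achievable, so global optimality follows without ever invoking the KKT machinery or, for that matter, the convexity of Appendix~A (your final appeal to it is in fact unnecessary). The paper's KKT route, on the other hand, is the more systematic template and generalizes more readily to the constrained variant in Scenario~B. A minor remark: your invocation of Lemma~\ref{lemma:u_v_0_same} is also not strictly needed in your framework, since $B_i=0$ already forces $\kappa_i=g_i=0$.
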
 

\begin{proof}
Please see the Appendix C.
\end{proof}
Given the optimal solution of $(\boldsymbol{\kappa}, \boldsymbol{g})$, we see that the optimal error probability in (\ref{eq:pro_formu_relax}) is %\footnote{For simplicity, we neglect the rounding effect of $\boldsymbol{\kappa}$ throughout the paper.} 
\[
 \textrm{P}_{e}^{(\textrm{opt}, 2)} = Q \left( \frac{ \sqrt{\bar{\mathcal{C}}}}{2} \max \left\{ \frac{\gamma_i |h_{i}|}{\tilde{\sigma}_v \sqrt{\xi_i} + |h_{i}|\sqrt{c_0}}\right\} \right).
\] 
Since (\ref{eq:pro_formu_relax}) is the relaxation of the MINLP (\ref{eq:pro_formu}), we see that $\textrm{P}_{e}^{(\textrm{opt}, 1)} \geq \textrm{P}_{e}^{(\textrm{opt}, 2)}$ \cite{IP_Book}. In practice, we may consider a floor operation for the number of samples as a suboptimal solution for (\ref{eq:pro_formu}), i.e.,
 \begin{flalign}\label{eq:pro_formu_sub_2}
 \kappa_{p,i}^{(\textrm{sub})} = \left \lfloor \kappa_{p,i}^{(\textrm{opt}, 2)} \right\rfloor ~~~\textrm{and}~~~g_{p,i}^{(\textrm{sub})} =   g_{p,i}^{(\textrm{opt},2)},~\forall i.
 \end{flalign}
Let us denote the resulting error probability as $\textrm{P}_{e}^{(\textrm{sub})}$. 
Then we see that $\textrm{P}_{e}^{(\textrm{opt}, 2)} \leq \textrm{P}_{e}^{(\textrm{opt}, 1)} \leq \textrm{P}_{e}^{(\textrm{sub})}$. Furthermore, when $\kappa_{p,1}^{(\textrm{opt}, 2)}$ is large, based on the first-order Taylor series, we have
 \begin{flalign*}
\textrm{P}_{e}^{(\textrm{sub})} - \textrm{P}_{e}^{(\textrm{opt},2)} ~=~\textrm{P}_{e}(\boldsymbol{\kappa}-\Delta \boldsymbol{\kappa}, {\boldsymbol{g}}) - \textrm{P}_{e}(\boldsymbol{\kappa}, {\boldsymbol{g}})~\approx~ \frac{\Delta \kappa_1 \delta_0 \delta_1}{8\sqrt{2\pi} } \exp(-\delta_0^2/8) (\kappa_1 + \delta_1)^{-2} ~\rightarrow~ 0^+,
 \end{flalign*}
where $\delta_0 = g_1 \gamma_1 |h_1|/\tilde{\sigma}_v$ and $\delta_1 = g_1^2 |h_1|^2/\tilde{\sigma}_v^2$. With small value of $\Delta \kappa_1$ (normally $\Delta \kappa_1 < 1$), it is interesting to note that our rounding algorithm is near optimal with large system level cost constraint. When $\bar{\mathcal{C}}$ is relatively small, as we will show in our simulations, our proposed suboptimal algorithm can also provide a good approximation to the optimal solution.

Based on (\ref{eq:pro_formu_sub_2}), when we jointly design the number of samples and amplifier gains subject to the system level cost constraint, only {\em one} secondary user needs to be active in the cognitive radio network, i.e., collecting local energy samples and transmitting the energy statistic to the fusion center. It is interesting to note that this strategy is similar to multiuser diversity where the base station selects the user with the highest channel to achieve maximum sum rate capacity \cite{Knopp_1995}. In this case, the fusion center will select the secondary user with the largest $\rho_i$ to perform local spectrum sensing and data forwarding. This will significantly reduce the bandwidth cost for data forwarding.

\emph{Remark}: We note that the result in (\ref{eq:pro_formu_sub_2}) can be implemented in a distributed fashion. The idea is based on opportunistic carrier sensing \cite{Zhao_2005} or opportunistic relaying \cite{Blet_2006} in which a backoff timer is set to be a decreasing function of channel state information. In particular, at the beginning of each sensing time slot, the fusion center broadcasts a beacon signal to synchronize all secondary users in the cognitive radio network. After estimating the channel gain\footnote{We assume reciprocity of the uplink and downlink channels between the fusion center and secondary users \cite{Wireless_Rapp}.} $|h_i|$, the secondary user calculates the control parameter $\rho_i$ based on its local received SNR $\gamma_i$ and then maps $\rho_i$ to a backoff timer $f(\rho_i)$ (equal to $c/\rho_i$ in \cite{Blet_2006}, where $c$ is a constant). Under a collision free situation, the secondary user with largest $\rho_i$ will expire first and perform local energy calculation and data forwarding during this time slot\footnote{Detailed analysis on how to reduce the collision probability for this scheme can be found in \cite{Zhao_2005}.}. Note that in this case, fusion center does not need to broadcast the optimal design parameter for each secondary user and this will reduce the cooperative sensing cost for broadcasting and reporting.
 
 %Here, $\Delta_1 =  \textrm{N}c_0 + \boldsymbol{1}^{\texttt{T}}\boldsymbol{\xi}$.
 
%-----------------------------------------
\subsubsection{Scenario B}\label{sec:caseIII_scenarioII}
%-----------------------------------------

We examine the optimization (\ref{eq:pro_formu_ind}) over both $\boldsymbol{\kappa}$ and $ \boldsymbol{g}$. Similar to Scenario A, we first consider the relaxation to the original MINLP in (\ref{eq:pro_formu_ind}), i.e.,
\begin{flalign}\label{eq:pro_formu_g_k_scenarioII}
\underset{\boldsymbol{\kappa}, \boldsymbol{g}}{\min}  ~~~& \textrm{P}_{e}(\boldsymbol{\kappa}, \boldsymbol{g})
\nonumber  \\  \textrm{s.t.}  ~~~& \mathcal{C}(\boldsymbol{\kappa}, \boldsymbol{g}) \leq \bar{\mathcal{C}},~\boldsymbol{\kappa}\in \mathcal{R}_{+}^n, ~\boldsymbol{g}\in \mathcal{R}_{+}^n,  \nonumber
  \\   ~~~&   \boldsymbol{\kappa}\preceq {\kappa}_{\max}\boldsymbol{1},~\xi_i g_i^2 \leq \mathcal{P}_{\max}.
\end{flalign}

Again, we see that this is a convex optimization problem and can be solved by standard methods. Let us denote the optimal solution in (\ref{eq:pro_formu_g_k_scenarioII}) as $\big(\kappa_{p,i}^{(\textrm{opt})},g_{p,i}^{(\textrm{opt})}\big)$. Similarly, we note that 
\begin{lemma}\label{lemma:u_v_0_same_ScenarioII}
Optimal solution of $(\boldsymbol{\kappa}, \boldsymbol{g})$ in (\ref{eq:pro_formu_g_k_scenarioII}) should satisfy either 1) $\kappa_i > 0 $ and $g_i > 0 $, or 2) $\kappa_i = 0 $ and $g_i = 0 $ for secondary user $i$. 
\end{lemma}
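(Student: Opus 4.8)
The plan is to reuse the argument behind Lemma~\ref{lemma:u_v_0_same} (Appendix~B) almost verbatim, since the only difference between (\ref{eq:pro_formu_g_k_scenarioII}) and the Scenario~A relaxation (\ref{eq:pro_formu_relax}) is the pair of \emph{upper-bound} individual constraints $\kappa_i \le \kappa_{\max}$ and $\xi_i g_i^2 \le \mathcal{P}_{\max}$. The structural fact I would exploit is that, because $Q(\cdot)$ is strictly decreasing, minimizing $\textrm{P}_{e}$ in (\ref{eq:error_prob_given_h}) is equivalent to maximizing $\sum_{i=1}^{n} f_i$, where $f_i = g_i^2 \kappa_i \gamma_i^2 |h_{i}|^2 / (g_i^2 |h_{i}|^2 + \kappa_i \tilde{\sigma}_v^2)$, and that each summand vanishes, $f_i = 0$, whenever $\kappa_i = 0$ or $g_i = 0$, while $f_i > 0$ requires both to be strictly positive.

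First I would argue by contradiction: suppose an optimal point of (\ref{eq:pro_formu_g_k_scenarioII}) has, for some user $i$, exactly one of $\kappa_i, g_i$ equal to zero. By symmetry take $\kappa_i = 0$ and $g_i > 0$ (the case $g_i = 0$, $\kappa_i > 0$ is identical after swapping the roles of $c_0$ and $\xi_i$ in the cost). Then $f_i = 0$, so user $i$ contributes nothing to the objective yet still consumes budget $\xi_i g_i^2 > 0$ in the constraint $\mathcal{C}(\boldsymbol{\kappa},\boldsymbol{g}) \le \bar{\mathcal{C}}$. I would then form the modified point by setting $g_i' = 0$: this leaves the objective unchanged (still $f_i = 0$), strictly relaxes the cost constraint by $\xi_i g_i^2$, and \emph{trivially} continues to satisfy the new individual constraints, since $\xi_i \cdot 0 = 0 \le \mathcal{P}_{\max}$ and $0 \le \kappa_{\max}$. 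Hence feasibility is preserved, and the freed budget can be redistributed to some active user $j$ (one with $\kappa_j, g_j > 0$) not yet saturated at its caps; a small increase in $\kappa_j$ or $g_j$ strictly raises $f_j$ and therefore strictly decreases $\textrm{P}_{e}$, contradicting optimality of the original point.

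The main obstacle is the boundary case specific to Scenario~B in which \emph{every} active user is already pinned at both individual limits $\kappa_{\max}$ and $\mathcal{P}_{\max}$, so the freed budget cannot be spent to strictly improve the objective. I would resolve this by noting that, in that case, the zeroed point $(\boldsymbol{\kappa}, \boldsymbol{g}')$ already attains the same (optimal) objective value while obeying the required either-or structure; thus an optimal solution of the stated form always exists, which is what the lemma asserts. This argument also makes transparent why adding upper-bound constraints cannot destroy the conclusion of Lemma~\ref{lemma:u_v_0_same}: driving a variable to zero can only make such constraints easier to satisfy, so the feasibility-preserving exchange used in Appendix~B carries over unchanged.
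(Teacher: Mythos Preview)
Your proposal is correct and follows essentially the same contradiction/budget-reallocation argument the paper uses for Lemma~\ref{lemma:u_v_0_same} in Appendix~B, which is precisely what the paper invokes here (it omits the proof as ``similar''). One small sharpening: your boundary case never actually arises, because the freed budget $\xi_i g_i^2$ can always be reallocated to user~$i$ \emph{itself} by choosing small $\kappa_i'>0$ and $g_i'>0$ with $c_0\kappa_i'+\xi_i(g_i')^2=\xi_i g_i^2$, which strictly increases $f_i$ from zero regardless of whether other users are capped---so you get the full statement that \emph{every} optimum has the stated form, not just that one exists.
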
 

The proof is similar to that of Lemma \ref{lemma:u_v_0_same} and thus omitted. With the additional constraints imposed on $\boldsymbol{\kappa}$ and $\boldsymbol{g}$, we see that in general it is difficult to obtain the closed-form solutions for $(\boldsymbol{\kappa}, \boldsymbol{g})$. Since the optimal solution of $(\boldsymbol{\kappa}, \boldsymbol{g})$ needs to be equal to 0 or greater than 0 simultaneously, we propose a heuristic suboptimal algorithm for Scenario B. Specifically, first we assign $\kappa_{\max}$
and $\mathcal{P}_{\max}$ to the secondary user with largest $\rho_i$. If there are remaining resources, we assign $\kappa_{\max}$ and $\mathcal{P}_{\max}$ to the secondary
user with second largest $\rho_i$ and so on until $\kappa_{\max}$ and $\mathcal{P}_{\max}$ cannot be assigned to any one
secondary user. In this case, we merely utilize the near-optimal
solution in (\ref{eq:pro_formu_sub_2}) to allocate $(\kappa_i, g_i)$ to the secondary user with the next largest $\rho_i$ and $\kappa_i = 0,~ g_i = 0$ to the rest of the secondary users. 
Let us denote the suboptimal solution as $\big(\kappa_{p,i}^{(\textrm{sub})},g_{p,i}^{(\textrm{sub})}\big) $. The detailed algorithm for Scenario B is illustrated in Algorithm 1.
 
\begin{algorithm}
\caption{\emph{Heuristic Suboptimal Algorithm}}
\begin{algorithmic} 
\STATE {Sort $\rho_{i}$ in a decreasing order.}
\FOR{$i=1$ to $n$}
\IF{$c_0\kappa_{\max} + \mathcal{P}_{\max} <  \bar{\mathcal{C}}$}
\STATE {$\bar{\mathcal{C}} \leftarrow \bar{\mathcal{C}} - c_0\kappa_{\max}-\mathcal{P}_{\max} $; $\kappa_i \leftarrow \kappa_{\max}$; $g_i \leftarrow \sqrt{\mathcal{P}_{\max}/\xi_i}$.}
\ELSE
\STATE {Compute $\kappa_i$ and $g_i$ from (\ref{eq:pro_formu_sub_2})};
%\IF{$\kappa_i > \kappa_{\max}$}
%\STATE {$\kappa_i \leftarrow \kappa_{\max}$; $z_i \leftarrow (\bar{\mathcal{C}} - c_0 \kappa_{\max})/\xi_i$.} 
%\ENDIF
%\IF {$z_i > \mathcal{P}_{\max}/\xi_i$}
%\STATE {$z_i \leftarrow \mathcal{P}_{\max}/\xi_i$; $\kappa_i \leftarrow (\bar{\mathcal{C}} -  \mathcal{P}_{\max})/c_0$.} 
%\ENDIF
\STATE {Adjust and truncate $\kappa_i$ and $g_i$ to guarantee $\kappa_i \in (0, \kappa_{\max}]$ and $g_i \in\Big(0, \sqrt{\mathcal{P}_{\max}/\xi_i}\Big]$} and stop.
\ENDIF
\ENDFOR
\end{algorithmic}
\end{algorithm}

%-----------------------------------------
\subsection{Case II: Optimization of Either $\boldsymbol{g}$ or $\boldsymbol{\kappa}$} 
%-----------------------------------------

In some applications, either $\boldsymbol{g}$ or $\boldsymbol{\kappa}$ may be fixed for secondary users. For example, local energy calculation may be scheduled in a fixed time slot and each secondary user is assigned same number of samples. In this case, we need to optimize the amplifier gain to achieve the desired error probability. On the other hand, we may need to choose appropriate number of samples when the amplifier gains are fixed. Here, we first assume fixed number of samples, i.e., $\boldsymbol{\kappa} = \tilde{\boldsymbol{\kappa}}$, then we need to minimize the error probability by choosing appropriate $\boldsymbol{g}$. Let us define global transmission power constraint as $\mathcal{P}_{\textrm{tot}} = \bar{\mathcal{C}} - c_0 \boldsymbol{1}^\texttt{T} \tilde{\boldsymbol{\kappa}}$. We now examine both these cases.

%-----------------------------------------
\subsubsection{Scenario A}\label{sec:caseI}
%-----------------------------------------

Here, we minimize global error probability assuming the global transmit power constraint is given as $\mathcal{P}_{\textrm{tot}}$. We
define $z_i = g_i^2$, $a_i = \tilde{\kappa}_i \gamma_i^2$ and $b_i = \tilde{\kappa}_i\tilde{\sigma}_v^2/ |h_i|^2$. Then, the optimization problem in (\ref{eq:pro_formu}) is equivalent to 
 \begin{flalign}\label{eq:pro_formu_k_2}
\underset{\boldsymbol{z}}{\min}  ~~~& \sum_{i=1}^{n} \frac{a_i b_i}{z_i +b_i}
\nonumber  \\  \textrm{s.t.}    ~~~&   \boldsymbol{\xi}^\texttt{T}\boldsymbol{z} \leq \mathcal{P}_{\textrm{tot}},~\boldsymbol{z}\succeq \boldsymbol{0}.
\end{flalign}

%Then, the Lagrangian function can be given as
%\[
%\mathcal{L}(\boldsymbol{\kappa}, \lambda_0, \boldsymbol{u}) = \sum_{i=1}^{n} \frac{a_i b_i}{z_i +b_i}+ \lambda_0 ( \textbf{1}^\texttt{T}\boldsymbol{\kappa}  - \mathcal{P}_{\textrm{tot}} ) - \boldsymbol{u}^\texttt{T}\boldsymbol{\kappa},
%\]
%where $\lambda_0$ and $u_i  \geq 0$ are Lagrangian multipliers. 
It is easy to see that (\ref{eq:pro_formu_k_2}) is a convex optimization problem. After some manipulations, we see that the Karush-Kuhn-Tucker (KKT) conditions can be given as
 \begin{flalign}%\label{eq:kkt_g}
\textstyle  \frac{a_i b_i}{( z_i +b_i)^2}    + u_i  - \lambda_0 \xi_i~=~& 0 \label{eq:kkt_k1} \\  
\lambda_0(\boldsymbol{\xi}^\texttt{T}\boldsymbol{z} - \mathcal{P}_{\textrm{tot}}) ~=~& 0 \label{eq:kkt_k2}  \\  
u_i z_i ~=~& 0\label{eq:kkt_k3}. %\\ 
%\lambda_0~\geq~ 0,~\lambda_i ~\geq~ 0,~z_i ~\geq~& 0 \label{eq:kkt_k4}.
\end{flalign}
where $\lambda_0\geq 0$ and $u_i  \geq 0$ are Lagrangian multipliers. First we assume that $\lambda_0 > 0$ and $u_i = 0$, then from (\ref{eq:kkt_k1}), we see that
 \begin{flalign}\label{eq:k_temp}
z_i =  \Big[ \sqrt{{a_i b_i}/{(\xi_i \lambda_0)}}   - b_i\Big]^+,  
\end{flalign}
where $[x]^+ = \max\{0,x\}$. Plugging this into (\ref{eq:kkt_k2}), we have
 $
 \sqrt{\lambda_0}= \frac{\sum_{i  \in \mathcal{S}_0} \sqrt{a_i b_i \xi_i}}{\mathcal{P}_{\textrm{tot}} + \sum_{i  \in \mathcal{S}_0}b_i \xi_i}. 
$ 
where $\mathcal{S}_0 = \{i|z_i > 0\}$. Then, we need to determine the set $\mathcal{S}_0$ to obtain the closed-form solution for $\boldsymbol{z}$. To do this, let us define 
$\beta_i =\sqrt{{b_i\xi_i}/{a_i}}.$ 
Without loss of generality, we assume $\beta_1 \leq \beta_2 \leq \cdots \leq \beta_{n}$. 
%After some derivations (follow the analysis in \cite{Cui_2007}), we have 
After some derivations, as outlined in Appendix D, we have
%Following \cite{Cui_2007}, we have
\begin{equation}\label{eq:set_S0}
\mathcal{S}_0 = \left\{
\begin{array}{ll}
 \{1,\cdots,i_\mathcal{S}|f(i_\mathcal{S}) < 1,f(i_\mathcal{S}+1) \geq 1 \}, &   f(n) \geq 1  \\
\{1,\cdots,n \}, &    \textrm{otherwise},
\end{array} \right.
\end{equation}
where
\begin{equation}\label{eq:compute_f_1st}
 f(i) ~=~  \textstyle \frac{\beta_i\sum_{j =1}^i \sqrt{a_j b_j \xi_j}}{\mathcal{P}_{\textrm{tot}}+ \sum_{j =1}^i b_j \xi_j }.   
\end{equation}
Thus, plugging $\lambda_0$ into (\ref{eq:k_temp}), the optimal amplifier gains can be obtained as
\begin{equation}\label{eq:opti_k_sol}
g_{p,i}^{(\textrm{opt})} = \left\{
\begin{array}{cl}
 \Big[\frac{{\tilde{\kappa}_i\tilde{\sigma}_v^2}}{|h_i|^2 } \Big( \frac{\gamma_i|h_i|}{\sqrt{\xi_i}} \eta - 1 \Big)\Big]^{1/2}, & i \in \mathcal{S}_0  \\
0, &i \notin \mathcal{S}_0,
\end{array} \right.
\end{equation}
where 
$
\eta = \frac{\sum_{i  \in \mathcal{S}_0} \tilde{\kappa}_i\xi_i/ |h_i|^2 + \mathcal{P}_{\textrm{tot}}/ \tilde{\sigma}_v^2 }{\sum_{i  \in \mathcal{S}_0} \tilde{\kappa}_i\sqrt{\xi_i}\gamma_i/|h_i|}.
$

 \emph{Remark}:
The optimal amplifier gains follow the water-filling strategy, i.e., with larger $\beta_i$, the chance for the secondary user to be inactive is higher, where $\beta_i$ is a measure of the observation and fusion channel quality. Note that 
$
\beta_i \propto {1}/({\gamma_i|h_i|}).
$
Hence, when the local received SNR is low or the fusion channel quality is poor, the secondary user tends not to transmit the local calculated energy to the fusion center.  

For comparison, we consider two suboptimal solutions for this optimization problem:
1) A simple solution is to choose equal transmission power for each secondary user, i.e.,
$
g_{p,i}^{(\textrm{equ})} =  \sqrt{\mathcal{P}_{\textrm{tot}}/(n\xi_i)}$; 
2) Using the Cauchy-Schwarz inequality, we see that $\textrm{P}_{e}({\kappa}_{\infty})$ in (\ref{eq:asym_kappa}) can be minimized when $g_i = c \gamma_i^2|h_i|^2/\xi_i^2$, where $c$ is a constant. Based on this, we propose an alternate suboptimal solution for amplifier gains, i.e.,
$
\textstyle g_{p,i}^{(\textrm{sub})} = \left( \frac{\gamma_i^2|h_i|^2/\xi_i^2}{\sum_{i=1}^{n}  \gamma_i^2|h_i|^2/\xi_i}\mathcal{P}_{\textrm{tot}} \right)^{1/2}.
$ Let us denote the asymptotic detection probability when $\tilde{\kappa}_i  \rightarrow \infty$ for these three solutions of amplifier gains as $\textrm{P}_{e}^{(\textrm{opt})}({\kappa}_{\infty})$, $\textrm{P}_{e}^{(\textrm{equ})}({\kappa}_{\infty})$ and $\textrm{P}_{e}^{(\textrm{sub})}({\kappa}_{\infty})$. Then, we note that
\begin{lemma}\label{lemma:asym_k_diff}
When $\beta_2 > \beta_1$, $\emph{P}_{e}^{(\emph{equ})}({\kappa}_{\infty}) \geq  \emph{P}_{e}^{(\emph{sub})}({\kappa}_{\infty}) \geq \emph{P}_{e}^{(\emph{opt})}({\kappa}_{\infty})$.
\end{lemma}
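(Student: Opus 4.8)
The plan is to exploit the monotonicity of $Q(\cdot)$: since $Q$ is strictly decreasing, the asserted chain $\textrm{P}_{e}^{(\textrm{equ})}(\kappa_{\infty}) \geq \textrm{P}_{e}^{(\textrm{sub})}(\kappa_{\infty}) \geq \textrm{P}_{e}^{(\textrm{opt})}(\kappa_{\infty})$ is equivalent, through \eqref{eq:asym_kappa}, to the \emph{reverse} chain on the arguments, namely $S_{\textrm{equ}} \leq S_{\textrm{sub}} \leq S_{\textrm{opt}}$, where $S(\boldsymbol{g}) = \sum_{i=1}^{n} g_i^2 \gamma_i^2 |h_{i}|^2$ and $S_{(\cdot)}$ denotes its value at the corresponding gain allocation evaluated as $\tilde{\kappa}_i \rightarrow \infty$. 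Writing $r_i = \gamma_i^2 |h_{i}|^2 / \xi_i$ and noting that $\xi_i$ tends to the constant $(1+2\pi_1\gamma_i)\sigma_n^4$ in this limit, I would substitute the two closed-form allocations directly. The equal-power choice $g_{p,i}^{(\textrm{equ})} = \sqrt{\mathcal{P}_{\textrm{tot}}/(n\xi_i)}$ gives $S_{\textrm{equ}} = \frac{\mathcal{P}_{\textrm{tot}}}{n}\sum_{i} r_i$, while the Cauchy--Schwarz choice $g_{p,i}^{(\textrm{sub})}$ gives $S_{\textrm{sub}} = \mathcal{P}_{\textrm{tot}}\, (\sum_{i} r_i^2)/(\sum_{i} r_i)$; both follow by elementary cancellation after inserting the expressions.

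The substantive step is evaluating $S_{\textrm{opt}}$, i.e. plugging the water-filling gains \eqref{eq:opti_k_sol} into $S$ and taking $\tilde{\kappa}_i \rightarrow \infty$. Here I would show that the active set collapses to the singleton $\{1\}$. First, $\beta_i = \tilde{\sigma}_v \sqrt{\xi_i}/(\gamma_i |h_{i}|)$ is asymptotically independent of $\tilde{\kappa}_i$, so the ordering $\beta_1 \leq \cdots \leq \beta_n$ is stable, and $\beta_i^2 = \tilde{\sigma}_v^2/r_i$ shows that $\beta_1$ smallest is equivalent to $r_1 = \max_i r_i$. Next, using $a_i = \tilde{\kappa}_i \gamma_i^2$ and $b_i = \tilde{\kappa}_i\tilde{\sigma}_v^2/|h_{i}|^2$, the definition $\beta_i=\sqrt{b_i\xi_i/a_i}$ simplifies \eqref{eq:compute_f_1st} at $i=1$ to $f(1) = b_1\xi_1/(\mathcal{P}_{\textrm{tot}}+b_1\xi_1) < 1$ for every finite $\tilde{\kappa}_1$, so user $1$ is always active, while the $\mathcal{P}_{\textrm{tot}}$ term becomes negligible at $i=2$, leaving
\[
\lim f(2) = \frac{(q_2/p_2)(\tilde{\kappa}_1 p_1 + \tilde{\kappa}_2 p_2)}{\tilde{\kappa}_1 q_1 + \tilde{\kappa}_2 q_2}, \quad p_j = \tfrac{\gamma_j\sqrt{\xi_j}}{|h_{j}|},\ q_j = \tfrac{\xi_j}{|h_{j}|^2}.
\]
One checks $\lim f(2) \geq 1 \Leftrightarrow p_1/q_1 \geq p_2/q_2 \Leftrightarrow \beta_1 \leq \beta_2$, so the strict hypothesis $\beta_2 > \beta_1$ yields $\lim f(2) > 1$, whence $\mathcal{S}_0 = \{1\}$. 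With only user $1$ active and the power constraint tight, $z_1 = \mathcal{P}_{\textrm{tot}}/\xi_1$ (which can also be verified directly from \eqref{eq:opti_k_sol} by simplifying $\eta$), giving $S_{\textrm{opt}} = \mathcal{P}_{\textrm{tot}}\, r_1 = \mathcal{P}_{\textrm{tot}} \max_i r_i$.

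With the three values in hand, the two inequalities reduce to classical facts. The first, $S_{\textrm{equ}} \leq S_{\textrm{sub}}$, is $(\sum_i r_i)^2 \leq n \sum_i r_i^2$, which is Cauchy--Schwarz applied to the vectors $(r_i)$ and $(1,\ldots,1)$. The second, $S_{\textrm{sub}} \leq S_{\textrm{opt}}$, is $\sum_i r_i^2 \leq (\max_i r_i)\sum_i r_i$, which holds term by term since $r_i^2 \leq r_i \max_j r_j$. Reversing through the decreasing map $Q(\cdot)$ then recovers the claimed ordering of error probabilities.

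I expect the main obstacle to be the second paragraph: rigorously taking the limit of the water-filling solution and proving that $\mathcal{S}_0$ shrinks to $\{1\}$ under $\beta_2 > \beta_1$. The delicate point is that the $\tilde{\kappa}_i$ need not grow at a common rate; the saving grace is that the sign of $\lim f(2) - 1$ depends only on whether $p_1/q_1 \gtrless p_2/q_2$ and is therefore rate-independent, so the singleton conclusion (and hence $S_{\textrm{opt}} = \mathcal{P}_{\textrm{tot}}\max_i r_i$) is robust. Everything else is routine substitution and the two elementary inequalities above.
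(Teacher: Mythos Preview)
Your proposal is correct and follows essentially the same route as the paper: you show that as $\tilde{\kappa}_i \to \infty$ the water-filling active set collapses to $\{1\}$ under $\beta_2>\beta_1$, compute the three asymptotic arguments (your $r_i$ is the paper's $\theta_i=\gamma_i^2|h_i|^2/\xi_i$), and finish with the same two elementary inequalities $(\sum r_i)^2\le n\sum r_i^2$ and $\sum r_i^2\le(\max r_i)\sum r_i$. The only cosmetic difference is in the $f(2)>1$ step, where the paper rewrites $f(2)$ so that the numerator exceeds the denominator by $(\beta_2-\beta_1)\sqrt{a_1b_1\xi_1}-\mathcal{P}_{\textrm{tot}}\to\infty$, whereas you drop $\mathcal{P}_{\textrm{tot}}$ first and check the sign of the remaining ratio; both arguments yield the same conclusion.
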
 

\begin{proof}
Please see the Appendix E.
\end{proof}

%-----------------------------------------
\subsubsection{Scenario B}\label{sec:caseI_ScenarioII}
%-----------------------------------------

Next, we minimize global error probability assuming the global transmit power constraint $\mathcal{P}_{\textrm{tot}}$ and the individual transmit power limit $\mathcal{P}_{\max}$. In this scenario, the optimization problem in (\ref{eq:pro_formu_ind}) becomes 
 \begin{flalign}\label{eq:pro_formu_k_ind}
\underset{\boldsymbol{z}}{\min}  ~~~& \sum_{i=1}^{n} \frac{a_i b_i}{z_i +b_i}
\nonumber  \\  \textrm{s.t.}  ~~~&  \boldsymbol{\xi}^{\texttt{T}} \boldsymbol{z} \leq \mathcal{P}_{\textrm{tot}},~\boldsymbol{z}\succeq \boldsymbol{0},~\xi_i z_i \leq \mathcal{P}_{\max}.
\end{flalign}
With the additional constraint in (\ref{eq:pro_formu_k_ind}) as compared to (\ref{eq:pro_formu_k_2}), the updated KKT conditions are
 \begin{flalign}%\label{eq:kkt_g}
\textstyle  \frac{a_i b_i}{( z_i +b_i)^2}    + u_i -v_i \xi_i  - \lambda_0 \xi_i~=~& 0 \label{eq:kkt_k1_ind} \\  
v_i (\xi_i z_i-\mathcal{P}_{\max}) ~=~& 0\label{eq:kkt_k2_ind}, %\\ 
%\lambda_0~\geq~ 0,~\lambda_i ~\geq~ 0,~\kappa_i ~\geq~& 0 \label{eq:kkt_k4}.
\end{flalign}
where $v_i  \geq 0$ are Lagrangian multipliers. First we assume that $\lambda_0 > 0$ and $u_i = v_i = 0$, then from (\ref{eq:kkt_k1_ind}), we see that 
$
z_i =   \sqrt{{a_i b_i}/{(\xi_i \lambda_0)}}   - b_i .
$
%To satisfy $0 < \kappa_i < \kappa_{\max}$, we have
%\[
%\textstyle \frac{\sqrt{a_i b_i}}{\kappa_{\max} + b_i} < \sqrt{\lambda_0} <\sqrt{\frac{a_i}{b_i}}.
%\]
Thus, based on the value of $\sqrt{\lambda_0}$, we can determine the optimal solution of $z_i$ as
\begin{equation*}
z_i = \left\{
\begin{array}{ll}
 0, & \textrm{if}~\sqrt{\lambda_0} > \sqrt{a_i/(b_i \xi_i)}  \\
\mathcal{P}_{\max}/\xi_i, & \textrm{if}~ 0<\sqrt{\lambda_0} < \sqrt{a_i b_i \xi_i}/(\mathcal{P}_{\max} + b_i \xi_i)  \\
 \sqrt{{a_i b_i}/{(\xi_i \lambda_0)}}   - b_i, &    \textrm{otherwise}.
\end{array} \right.
\end{equation*}
Let us define two disjoint sets for secondary users as $\mathcal{S}_1 = \{i|z_i = \mathcal{P}_{\max}/\xi_i\}$ and $\mathcal{S}_2 = \{i|0< z_i < \mathcal{P}_{\max}/\xi_i\}$. Plugging $z_i$ into (\ref{eq:kkt_k2}), we have
\[
\textstyle |\mathcal{S}_1|\mathcal{P}_{\max} + \left(1/\sqrt{\lambda_0}\,\right)\sum_{i  \in \mathcal{S}_2} \sqrt{a_i b_i \xi_i} - \sum_{i  \in \mathcal{S}_2} b_i\xi_i ~=~ \mathcal{P}_{\textrm{tot}}, 
\]
which implies that
$
  \sqrt{\lambda_0}  =\frac{\sum_{i  \in \mathcal{S}_2} \sqrt{a_i b_i \xi_i}}{\mathcal{P}_{\textrm{tot}} -|\mathcal{S}_1|\mathcal{P}_{\max} +\sum_{i  \in \mathcal{S}_2} b_i \xi_i}.
$

In order to determine $\mathcal{S}_1$, $\mathcal{S}_2$ and $\sqrt{\lambda_0}$ and thus obtain the closed-form solution for $z_i$, we propose a two-stage generalized water-filling algorithm as follows:
\begin{enumerate}
	\item In the first stage, we aim to determine the set $\mathcal{S}_1$. To do this, let us define 
$\tilde{\beta}_{i} =\frac{\mathcal{P}_{\max} + b_i \xi_i}{\sqrt{a_i b_i \xi_i}}.$ 
Without loss of generality, we assume $\tilde{\beta}_1 \leq \tilde{\beta}_2 \leq \cdots \leq \tilde{\beta}_{n}$. Then, similar to Scenario A, $\mathcal{S}_1$ can be obtained by (\ref{eq:set_S0}) with 
\begin{equation}\label{eq:compute_f_sec}
  \tilde{f}(i) ~=~  \frac{\tilde{\beta}_{i}\sum_{m \in \tilde{\mathcal{S}}_i} \sqrt{a_m b_m \xi_m}}{\mathcal{P}_{\textrm{tot}} - i \mathcal{P}_{\max} + \sum_{m \in \tilde{\mathcal{S}}_i} b_m \xi_m},~~i \leq \bigg \lfloor\frac{\mathcal{P}_{\textrm{tot}}}{\mathcal{P}_{\max}} \bigg\rfloor,
\end{equation}
where $\tilde{\mathcal{S}}_i = \{m|\beta_m < \tilde{\beta}_{i},~i<m\leq n\}$. For an outline, please see Appendix F.
After $\mathcal{S}_1$ is determined, we have $z_i = \mathcal{P}_{\max}/\xi_i,~\forall i\in \mathcal{S}_1$.
\item In the second stage, we follow the similar procedure in Scenario A to obtain $\mathcal{S}_2$ and $z_i$ for $i \notin \mathcal{S}_1$. The solution is given in (\ref{eq:opti_k_sol}), except that $\mathcal{P}_{\textrm{tot}}$ and $n$ are replaced by $\mathcal{P}_{\textrm{tot}} - |\mathcal{S}_1|\mathcal{P}_{\max}$ and $n - |\mathcal{S}_1|$, respectively.
\end{enumerate}

To summarize, the detailed generalized water-filling algorithm for Scenario B is illustrated in Algorithm 2. 
With amplifier gains fixed, we need to optimize the number of samples to achieve the desired error probability. In this case, the solutions of the number of samples are similar to those of the amplifier gains in both scenarios (with additional relaxation consideration), thus omitted from this paper. %For instance, in Scenario A, the optimal number of samples for each secondary user follows a water-filling strategy. %Specifically, when $\gamma_i$ is low, secondary user $i$ tends 
%not to collect local energy samples. 
%Specifically, when the local received SNR is low, the secondary user tends not to collect local energy samples. 

\begin{algorithm}
\caption{\emph{Generalized Water-filing Algorithm}}
\begin{algorithmic}
\STATE \underline{\textbf{Stage 1}}: {Sort $\tilde{\beta}_{i}$ in an increasing order.}
\FOR{$i=1$ to $\big \lfloor\frac{\mathcal{P}_{\textrm{tot}}}{\mathcal{P}_{\max}} \big\rfloor$}
\STATE {Compute $\tilde{f}(i)$ from (\ref{eq:compute_f_sec})};
\IF{$\tilde{f}(i) \geq 1$}
\STATE {Set $\mathcal{S}_1 = \{1,\cdots, i\}$ and stop.}
\ENDIF
\ENDFOR
\FOR{$i \in \mathcal{S}_1$} \STATE $z_i \leftarrow \mathcal{P}_{\max}/\xi_i $.
\ENDFOR
\STATE \underline{\textbf{Stage 2}}: {For $j\notin \mathcal{S}_1$, sort $\beta_j$ in an increasing order and set $\mathcal{P}_{\textrm{tot}} \leftarrow \mathcal{P}_{\textrm{tot}} - |\mathcal{S}_1|\mathcal{P}_{\max}$ and $n \leftarrow n - |\mathcal{S}_1|$.}
\FOR{$j=1$ to $n$}
\STATE Compute $f(j)$ from (\ref{eq:compute_f_1st});
\IF{$f(j) \geq 1$}
\STATE Set $\mathcal{S}_2 = \{1,\cdots, j\}$ and stop. 
\ENDIF
\ENDFOR
\FOR{$j \in \mathcal{S}_2$} 
\STATE Compute $\eta$ and $z_j$ from (\ref{eq:opti_k_sol}).
\ENDFOR
\end{algorithmic}
\end{algorithm}
%\vspace{-0.35in}

 %-----------------------------------------
\section{Optimization: Minimization of System Level Cost}
%-----------------------------------------

In the section, we aim to minimize the system level cost of cooperative spectrum sensing to achieve a
targeted error probability. Similar to the optimization problem in Section \ref{sec:primal}, we consider two scenarios
which depend on whether additional constraints are imposed or not. For instance, in Scenario
A, the optimization problem can be formulated as:
  \begin{flalign}\label{eq:cost_min}
\underset{\boldsymbol{\kappa}, \boldsymbol{g}}{\min}  ~~~& \mathcal{C}(\boldsymbol{\kappa}, \boldsymbol{g})
\nonumber  \\  \textrm{s.t.}  ~~~&  \textrm{P}_e(\boldsymbol{\kappa}, \boldsymbol{g}) \leq \bar{\textrm{P}}_e, ~\boldsymbol{\kappa}\in \mathcal{Z}_{+}^n, ~\boldsymbol{g}\in \mathcal{R}_{+}^n,
\end{flalign}
where $\bar{\textrm{P}}_e$ is a predefined error probability threshold. Similar to the analysis in Section \ref{sec:caseIII_scenarioI}, we consider the relaxation, i.e., $\boldsymbol{\kappa}\in \mathcal{R}_{+}^n$ to this MINLP, and the optimal solution of this relaxation problem is stated as follows:
\begin{theorem}\label{thm:caseIII_sol_dual}
Consider the optimization problem in (\ref{eq:cost_min}) and $\rho_i$ as defined in Theorem \ref{thm:caseIII_sol}. Then, 
\begin{flalign}%\label{eq:sol_dual_formu_g_k}
\kappa_{d,i}^{(\emph{opt})} =& \left\{
\begin{array}{cl}
   \frac{\epsilon}{\gamma_i^2} \Big(1 + \sqrt{\frac{\xi_i}{c_0}}\frac{\tilde{\sigma}_v}{|h_i|} \Big) , & ~~~~~~~~~~~i = 1  \\
0, &~~~~~~~~~~~i > 1,
\end{array} \right. \nonumber \\
g_{d,i}^{(\emph{opt})} =& \left\{
\begin{array}{cl}
 \Big[ \frac{\epsilon \tilde{\sigma}_v^2}{\gamma_i^2|h_i|^2 } \Big(1 + \sqrt{\frac{c_0}{\xi_i}} \frac{|h_i|}{\tilde{\sigma}_v}\Big)\Big]^{1/2}, & ~ i = 1  \\
0, & ~i > 1,
\end{array} \right.
\end{flalign}
where $\epsilon = 4 [Q^{-1}(\bar{\emph{P}}_e)]^2$.
\end{theorem}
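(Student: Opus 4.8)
The plan is to exploit the fact that (\ref{eq:cost_min}) is the exact ``inverse'' of the error-minimization problem (\ref{eq:pro_formu_relax}) already solved in Theorem~\ref{thm:caseIII_sol}, so that its solution can be read off from that theorem after a single substitution, rather than resolving a KKT system from scratch.

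First I would recast the error constraint in a usable form. Writing $T_i(\kappa_i,g_i)=\frac{g_i^2\kappa_i\gamma_i^2|h_i|^2}{g_i^2|h_i|^2+\kappa_i\tilde{\sigma}_v^2}$ for the per-user term in (\ref{eq:error_prob_given_h}), and using that $Q(\cdot)$ is strictly decreasing with $Q^{-1}(\bar{\textrm{P}}_e)>0$ for $\bar{\textrm{P}}_e<1/2$, the constraint $\textrm{P}_e\le\bar{\textrm{P}}_e$ is equivalent to $\sum_{i=1}^n T_i(\kappa_i,g_i)\ge\epsilon$ with $\epsilon=4[Q^{-1}(\bar{\textrm{P}}_e)]^2$. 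Each $T_i$ is jointly concave on $\mathcal{R}_+^2$ (the content of Appendix~A underlying (\ref{eq:pro_formu_relax})), so the relaxed feasible set is convex and the objective $\mathcal{C}(\boldsymbol{\kappa},\boldsymbol{g})=\sum_i(c_0\kappa_i+\xi_i g_i^2)$ is linear; hence the relaxation of (\ref{eq:cost_min}) is a convex program. The argument of Lemma~\ref{lemma:u_v_0_same} carries over verbatim to show that $\kappa_i$ and $g_i$ vanish together at optimality, so again only one user can be active.

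The core step is an inversion lemma. Let $F(\bar{\mathcal{C}})$ denote the optimal value of (\ref{eq:pro_formu_relax}); by Theorem~\ref{thm:caseIII_sol} and the expression for $\textrm{P}_e^{(\textrm{opt},2)}$ we have $F(\bar{\mathcal{C}})=Q\big(\tfrac{1}{2}\sqrt{\bar{\mathcal{C}}}\,\sqrt{\rho_1}\big)$ with $\rho_1=\max_i\rho_i$, continuous and strictly decreasing in $\bar{\mathcal{C}}$. I would then show the minimum cost in (\ref{eq:cost_min}) equals the unique $\bar{\mathcal{C}}^\star$ solving $F(\bar{\mathcal{C}}^\star)=\bar{\textrm{P}}_e$, with optimizer equal to that of (\ref{eq:pro_formu_relax}) at budget $\bar{\mathcal{C}}^\star$. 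For feasibility, the Theorem~\ref{thm:caseIII_sol} point at budget $\bar{\mathcal{C}}^\star$ attains $\textrm{P}_e=F(\bar{\mathcal{C}}^\star)=\bar{\textrm{P}}_e$ and costs exactly $\bar{\mathcal{C}}^\star$ (the cost constraint in (\ref{eq:pro_formu_relax}) is tight because $\textrm{P}_e$ strictly decreases in the active user's $\kappa_1,g_1$), so the dual optimum is $\le\bar{\mathcal{C}}^\star$. The reverse is a contradiction argument: any dual-feasible point of cost $\mathcal{C}'<\bar{\mathcal{C}}^\star$ is primal-feasible at budget $\mathcal{C}'$, forcing $F(\mathcal{C}')\le\bar{\textrm{P}}_e=F(\bar{\mathcal{C}}^\star)$, which violates strict monotonicity of $F$.

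Finally, solving $F(\bar{\mathcal{C}}^\star)=\bar{\textrm{P}}_e$ gives $\tfrac{1}{2}\sqrt{\bar{\mathcal{C}}^\star}\sqrt{\rho_1}=Q^{-1}(\bar{\textrm{P}}_e)$, i.e. $\bar{\mathcal{C}}^\star=\epsilon/\rho_1=\epsilon(\tilde{\sigma}_v\sqrt{\xi_1}+|h_1|\sqrt{c_0})^2/(\gamma_1^2|h_1|^2)$. Substituting this into (\ref{eq:opt_k_z_case_III}) and factoring $\tilde{\sigma}_v\sqrt{\xi_1 c_0}+|h_1|c_0=\sqrt{c_0}(\tilde{\sigma}_v\sqrt{\xi_1}+|h_1|\sqrt{c_0})$ in the denominator of $\kappa_{p,1}^{(\textrm{opt},2)}$, and $\tilde{\sigma}_v\xi_1+|h_1|\sqrt{\xi_1 c_0}=\sqrt{\xi_1}(\tilde{\sigma}_v\sqrt{\xi_1}+|h_1|\sqrt{c_0})$ in that of $g_{p,1}^{(\textrm{opt},2)}$, collapses the expressions to exactly the claimed $\kappa_{d,1}^{(\textrm{opt})}$ and $g_{d,1}^{(\textrm{opt})}$, with all other users inactive. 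I expect the only real subtlety to be the inversion lemma, namely justifying tightness of the cost constraint and the strict monotonicity of $F$ that makes $F^{-1}$ well defined; the concluding substitution is routine algebra. A self-contained alternative is to write the KKT system for the relaxed (\ref{eq:cost_min}) directly and eliminate the multiplier as in (\ref{eq:kkt_k1})--(\ref{eq:kkt_k3}), but the inversion route reuses Theorem~\ref{thm:caseIII_sol} verbatim and is considerably shorter.
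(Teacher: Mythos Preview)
Your proposal is correct, and it takes a genuinely different route from what the paper indicates. The paper's own treatment is simply ``The proof is similar to that of Theorem~\ref{thm:caseIII_sol} and thus omitted,'' meaning the intended argument is to write the Lagrangian and KKT conditions for the relaxed (\ref{eq:cost_min}) from scratch, obtain the proportionality $z_i=\omega_i\kappa_i$ on the active set, reduce to a linear program in $\boldsymbol{\kappa}$, and read off the single-user solution exactly as in Appendix~C. Your approach instead treats (\ref{eq:cost_min}) as the inverse of (\ref{eq:pro_formu_relax}): you use the explicit closed form $\textrm{P}_e^{(\textrm{opt},2)}=Q\big(\tfrac{1}{2}\sqrt{\bar{\mathcal{C}}\rho_1}\big)$ to show the optimal-value map $F(\bar{\mathcal{C}})$ is strictly decreasing, invert it to find $\bar{\mathcal{C}}^\star=\epsilon/\rho_1$, and then simply substitute into (\ref{eq:opt_k_z_case_III}). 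This is shorter and cleanly reuses Theorem~\ref{thm:caseIII_sol} rather than re-deriving it; the paper's KKT route is more self-contained but duplicates work. One minor slip: you call the cost objective ``linear,'' but $\xi_i g_i^2$ is quadratic in $g_i$; it becomes linear only after the paper's substitution $z_i=g_i^2$, which is what you implicitly use and which suffices for convexity. Your verification of the final algebraic collapse is correct.
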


The proof is similar to that of Theorem \ref{thm:caseIII_sol} and thus omitted. Similarly, we may consider a ceiling operation for the number of samples as a near-optimal solution for (\ref{eq:cost_min}). Additionally, we see that only {\em one} secondary user needs be active for collecting the samples for local energy calculation and transmitting energy statistics to fusion center. We have separately examined the optimization problem for the remaining cases considered in Section \ref{sec:primal}, i.e., when jointly designing $\boldsymbol{\kappa}$ and $ \boldsymbol{g}$ for Scenario B; and when designing either $\boldsymbol{\kappa}$ or $ \boldsymbol{g}$ for both Scenarios A and Scenario B. Due to space limitations, we omit the discussions in the paper.

%\vspace{-0.15in}                                                                                                    
%-----------------------------------------
\section{Simulation Results}
%-----------------------------------------

In this section, we present numerical results for system level performance evaluation and optimal design for cooperative spectrum sensing in cognitive radio networks. In the following results, we assume ${\sigma}_n^2 = {\sigma}_v^2 = 1$ and 
$c_0 = 1$.

%\vspace{-0.15in}
 %-----------------------------------------
\subsection{Average Error Probability}
%-----------------------------------------

In Fig.~\ref{fig:Avg_Perf}, we plot the average error probability versus the (equal) amplifier gain for all three channel scenarios from Table \ref{tb1:3_scenarios}. We see that in the low and moderate fusion SNR regimes, Channel Environment II (Rayleigh fading observation channels and
AWGN fusion channels) provides the lowest average error probability among all three scenarios. Thus, to maintain a desired detection performance, the fusion
channels need to be as reliable as possible, while the local
received SNRs can be dynamic and be used to exploit spatial diversity.
%However, in the high SNR regime, we observe that Channel Environment I (AWGN observation channels and 
%Rayleigh fading fusion channels) has the best performance. This is because in the high
%global SNR regime, fusion channel gain does not
%impact error probability and reliable observation
%channels are needed to improve detection performance. 

\begin{figure}[htb] 
 \centerline{\epsfxsize 4.5in\epsfbox{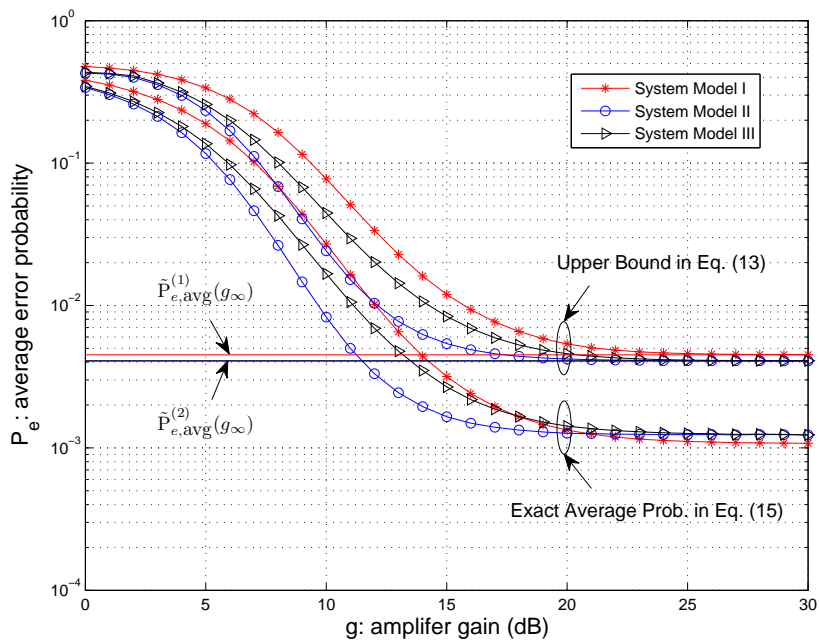}}
 \caption {Average error probability for cooperative spectrum sensing. In the simulation, we choose $\bar{\gamma} = -8\textrm{dB}$, $\kappa = 100$ and $n = 15$.}\label{fig:Avg_Perf} \vspace{-0.25in}
\end{figure}

%-----------------------------------------
\subsection{Minimization of Error Probability}
%-----------------------------------------

For the optimal system design, we assume $n = 6$, $\boldsymbol{h} = [1.56, 1.99, 0.37, 1.52, 0.39, 1.98]^{\texttt{T}}$ and $\boldsymbol{\gamma} = [-8.86,  -15.23, -7.21, -5.09, -10.00,  -10.97]^{\texttt{T}}(\textrm{dB})$. Here, we define the global fusion SNR as $\snr = \mathcal{P}_{\textrm{tot}}/\sigma_v^2$. For comparison, we consider equal number of samples and amplifier gains as a suboptimal solution. 

In Fig.~\ref{fig:Primal_caseIII}, we plot the error probability versus system level cost constraint in Case I for joint optimization of $\boldsymbol{\kappa}$ and $\boldsymbol{g}$ for both Scenario A and B. In this simulation, we utilize standard MINLP methods \cite{MINLP_Book} for optimization problem in (\ref{eq:pro_formu}); the closed-form solution $\big(\kappa_{p,i}^{(\textrm{opt}, 2)}, g_{p,i}^{(\textrm{opt}, 2)}\big)$ in Theorem \ref{thm:caseIII_sol} for the convex relaxation of the optimization problem in (\ref{eq:pro_formu_relax}); and our proposed suboptimal solution $\big(\kappa_{p,i}^{(\textrm{sub})}, g_{p,i}^{(\textrm{sub})}\big)$ in (\ref{eq:pro_formu_sub_2}) in Scenario A  and interior-point method to solve the optimization problem in Scenario B. As expected, we see that in Scenario A, the error probability of optimization problem in (\ref{eq:pro_formu}) and its relaxation in (\ref{eq:pro_formu_relax}) converges, even with relatively small system level cost constraint. Also, our proposed suboptimal solution in (\ref{eq:pro_formu_sub_2}) is near optimal as previously mentioned. Furthermore, we observe that the error performance is degraded with the additional constraints in Scenario B. Additionally, our proposed suboptimal algorithm in Scenario B has negligible performance loss compared to the optimal solution. 

\begin{figure}[htb]
 \centerline{\epsfxsize 4.28in\epsfbox{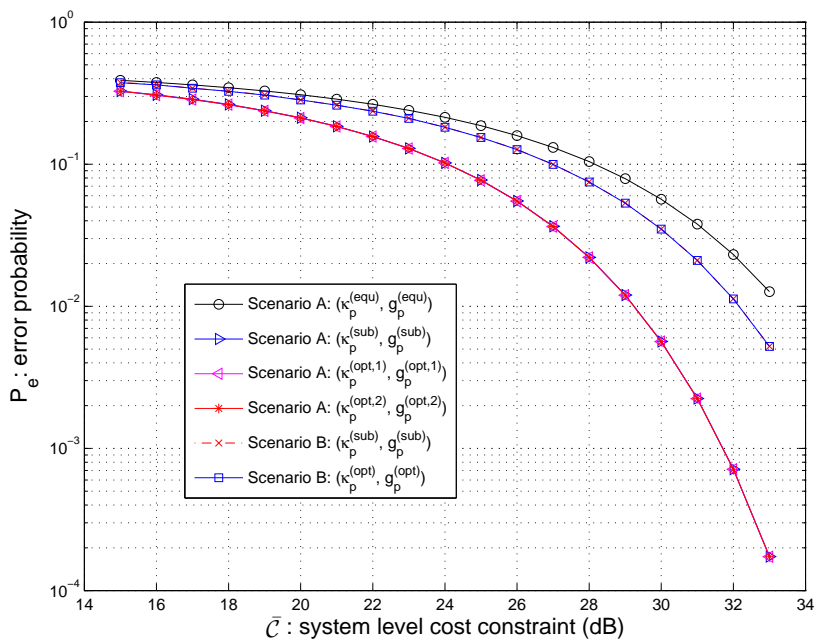}}%\vspace{-0.1in}
 \caption {Case I: error probability for different solutions of $(\boldsymbol{\kappa}, \boldsymbol{g})$. In Scenario B, we choose $\kappa_{\max} = 0.2 \lfloor\bar{\mathcal{C}}/c_0\rfloor$ and $\mathcal{P}_{\max} = 0.2\bar{\mathcal{C}}$.}\label{fig:Primal_caseIII}
\end{figure}

Fig.~\ref{fig:Primal_caseII} shows the error probability versus total number of samples in Case II (optimization of $\boldsymbol{g}$ given $\tilde{\boldsymbol{\kappa}}$). 
As expected, we see that the optimal solution provides superior performance to suboptimal solutions. From the plots, we also observe that with additional individual constraints, the optimal solution for Scenario
B performs worse than that of Scenario A. Furthermore, when total number of samples increases, we see that the error probability approaches the asymptotic bound. In particular, $\textrm{P}_{e}^{(\textrm{equ})}({\kappa}_{\infty}) \geq \textrm{P}_{e}^{(\textrm{sub})}({\kappa}_{\infty}) \geq \textrm{P}_{e}^{(\textrm{opt})}({\kappa}_{\infty})$ as stated in Lemma \ref{lemma:asym_k_diff}.

\begin{figure}[htb]
 \centerline{\epsfxsize 4.5in\epsfbox{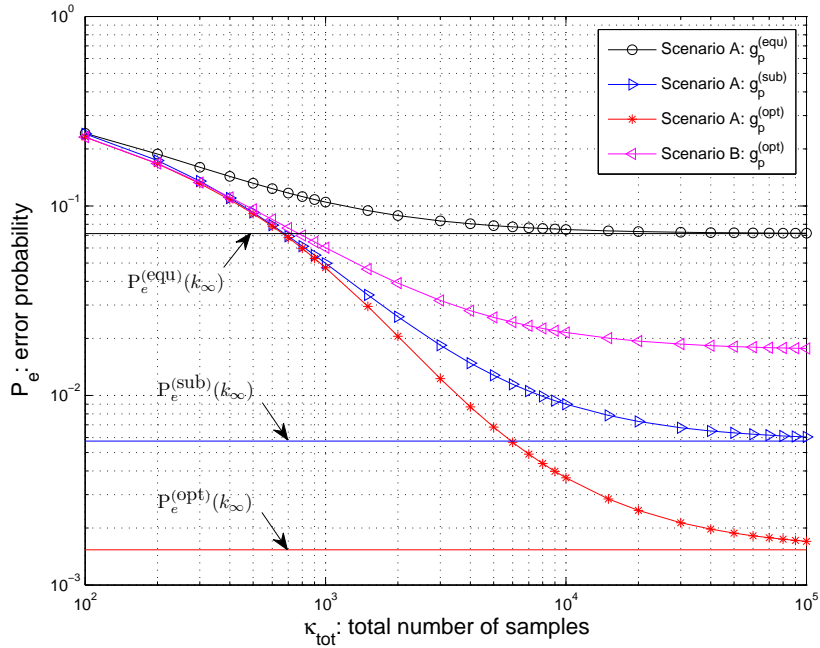}} 
 \caption {Case II: error probability for different solutions of $\boldsymbol{g}$. In the simulation, we choose $\snr = 25\textrm{dB}$ and fixed number of samples $\tilde{{\kappa}}_i = \lfloor \kappa_{\textrm{tot}}/n \rfloor$. In Scenario B, we choose  $\mathcal{P}_{\max} = 0.4\mathcal{P}_{\textrm{tot}}$.}\label{fig:Primal_caseII}
\end{figure}

%-----------------------------------------
\section{Conclusions}
%-----------------------------------------

In this paper, we present the performance evaluation and optimal design for spectrum sensing
in the cognitive radio networks. We first analyze the average error probability by considering
a range of channel realizations between the primary user and the secondary users and
between the secondary users and the fusion center. Then, we investigate the optimization problems for spectrum sensing. In particular, when jointly designing the number of samples and amplifier gains, we demonstrate that only {\em one} secondary
user needs be active, i.e., collecting local energy samples and transmitting energy statistic to
fusion center. Furthermore, we derive closed-form expressions for
optimal solutions and propose a generalized water-filling algorithm when number of samples
or amplifier gains are fixed and additional constraints are imposed.

%\vspace{-0.15in}
%%-----------------------------------------
\section{Appendix}
%%-----------------------------------------
 
%-----------------------------------------
\subsection{Proof of Convexity of Optimization Problem (\ref{eq:pro_formu_relax})} \label{sec:apped1}
%-----------------------------------------
 \begin{proof}
Let us define $z_i = g_i^2$, $p_i = \tilde{\sigma}_v^2/(\gamma_i^2 |h_{i}|^2)$, $q_i = 1/\gamma_i^2$ and $
\mathcal{F}_i(\kappa_i, z_i) = \frac{\kappa_i z_i}{p_i \kappa_i + q_i z_i}
$. To simplify our analysis, when $\kappa_i = z_i = 0$, we assume $\mathcal{F}_i(\kappa_i, z_i) = 0$\footnote{In practice, this assumption can be alleviated by adding a sufficiently small constant in the denominator.}.  Then, the optimization problem (\ref{eq:pro_formu_relax}) becomes
\begin{flalign}\label{eq:pro_formu_g_k_1}
\underset{\boldsymbol{\kappa}, \boldsymbol{z}}{\max}  ~~~&    \sum_{i=1}^{n} \mathcal{F}_i(\kappa_i, z_i)
\nonumber  \\  \textrm{s.t.}  ~~~&   c_0 \textbf{1}^\texttt{T}\boldsymbol{\kappa}  + \boldsymbol{\xi}^{\texttt{T}}\boldsymbol{z} \leq   \bar{\mathcal{C}}, ~  \boldsymbol{\kappa}\succeq  \boldsymbol{0},~\boldsymbol{z}\succeq  \boldsymbol{0}.
\end{flalign} 
After some manipulations, we see that the Hessian of $\mathcal{F}_i(\kappa_i, z_i)$ is given as
\[
\nabla^2 \mathcal{F}_i(\kappa_i, z_i) ~=~ -\frac{2p_i q_i}{(p_i \kappa_i + q_i z_i)^3} 
\begin{bmatrix}
z_i \\ \kappa_i
\end{bmatrix}
\begin{bmatrix}
z_i \\ \kappa_i
\end{bmatrix}^{\texttt{T}} \preceq  \boldsymbol{0}.
\] 
Thus, $\mathcal{F}_i(\kappa_i, z_i)$ is a concave function, which indicates that the objective function in (\ref{eq:pro_formu_g_k_1}) is also concave. This completes the proof.
\end{proof}
%\vspace{-0.15in}
%-----------------------------------------
\subsection{Proof of Lemma \ref{lemma:u_v_0_same}}
%-----------------------------------------
 \begin{proof}
 We prove this lemma by contradiction. First we assume that $(\boldsymbol{\kappa}, \boldsymbol{z})$ with $\kappa_i =0, z_i >0$ or $\kappa_i >0, z_i =0$ for secondary user $i$ is the optimal solution for (\ref{eq:pro_formu_g_k_1}). Let us define the optimal value is $p^*$. Since $\kappa_i z_i = 0$, the objective function remains unchanged in (\ref{eq:pro_formu_g_k_1}). Then, the optimization problem becomes 
 \begin{flalign}\label{eq:proof_k_z_1}
\underset{\boldsymbol{\kappa}, \boldsymbol{z}}{\max}  ~~~&  \textstyle  \sum_{j=1, j\neq i}^{n}\mathcal{F}_j(\kappa_j, z_j)
\nonumber  \\  \textrm{s.t.}  ~~~&   \textstyle c_0 \sum_{j=1, j\neq i}^{n}  \kappa_j + \sum_{j=1, j\neq i}^{n} \xi_j z_j\leq   \bar{\mathcal{C}}' \nonumber \\   ~~~&  \kappa_j \geq 0,~z_j \geq 0,~~\forall j\neq i.
\end{flalign}
where $\bar{\mathcal{C}}' = \bar{\mathcal{C}} - \xi_i z_i$ when $\kappa_i =0, z_i >0$, or $\bar{\mathcal{C}}' = \bar{\mathcal{C}} - c_0 \kappa_i$ when $\kappa_i >0, z_i =0$. In either case, we see that $\bar{\mathcal{C}}' < \bar{\mathcal{C}}$.
To prove this lemma, we need to find a substitute solution $(\boldsymbol{\kappa}', \boldsymbol{z}')$ with optimal value $p'^* > p^*$. To do this, let us replace the solution for secondary user $i$ as $\kappa_i'=  z_i' =0$. In this case, the optimization problem becomes 
 \begin{flalign}\label{eq:proof_k_z_2}
\underset{\boldsymbol{\kappa}, \boldsymbol{z}}{\max}  ~~~&  \textstyle  \sum_{j=1, j\neq i}^{n} \mathcal{F}_j(\kappa_j, z_j)
\nonumber  \\  \textrm{s.t.}  ~~~&   \textstyle c_0 \sum_{j=1, j\neq i}^{n}  \kappa_j + \sum_{j=1, j\neq i}^{n} \xi_j z_j\leq   \bar{\mathcal{C}} \nonumber \\   ~~~&  \kappa_j \geq 0,~z_j \geq 0,~~\forall j\neq i.
\end{flalign}
Then, we see that it is equivalent to proving that the optimal value $p'^*$ in (\ref{eq:proof_k_z_2}) is greater than $p^*$ in (\ref{eq:proof_k_z_1}). Since the objective and constraint functions in these two optimization problems are identical, this can be easily proved by convex relaxation in optimization problem, which implies that we can find a substitute solution $(\boldsymbol{\kappa}', \boldsymbol{z}')$, i.e., $p'^* > p^*$. This contradicts the assumption that $(\boldsymbol{\kappa}, \boldsymbol{z})$ is the optimal solution and we can conclude the proof.
\end{proof}
%\vspace{-0.15in}
%-----------------------------------------
\subsection{Proof of Theorem \ref{thm:caseIII_sol}}
%-----------------------------------------

\begin{proof}
The Lagrangian function of (\ref{eq:pro_formu_g_k_1}) can be given as
\begin{flalign*}
\mathcal{L}(\boldsymbol{\kappa},\boldsymbol{z}, \lambda_0, \boldsymbol{u}, \boldsymbol{v}) ~=~ -\sum_{i=1}^{n} \frac{\kappa_i z_i}{p_i \kappa_i + q_i z_i} + \lambda_0 (c_0 \textbf{1}^\texttt{T}\boldsymbol{\kappa}  + \boldsymbol{\xi}^{\texttt{T}}\boldsymbol{z} ) - \boldsymbol{u}^{\texttt{T}}\boldsymbol{\kappa} - \boldsymbol{v}^{\texttt{T}}\boldsymbol{z} - \lambda_0  \bar{\mathcal{C}},
\end{flalign*}
where $\lambda_0\geq0$, $u_i  \geq 0$ and $v_i  \geq 0$ are Lagrangian multipliers. Here the KKT conditions are
\begin{flalign}%\label{eq:kkt_g}
\textstyle \frac{q_i z_i^2}{(p_i \kappa_i + q_i z_i)^2}    + u_i  - c_0 \lambda_0~=~& 0 \label{eq:kkt_III_k1} \\  
\textstyle \frac{p_i \kappa_i^2}{(p_i \kappa_i + q_i z_i)^2}    + v_i  - \xi_i \lambda_0~=~& 0 \label{eq:kkt_III_k2} \\ 
\lambda_0\left(c_0 \textbf{1}^\texttt{T}\boldsymbol{\kappa}  + \boldsymbol{\xi}^{\texttt{T}}\boldsymbol{z}-   \bar{\mathcal{C}}\,\right) ~=~& 0 \label{eq:kkt_III_k3}  \\  
u_i \kappa_i ~=~ 0, ~~v_i z_i ~=~& 0.\label{eq:kkt_III_k4} 
%v_i z_i ~=~& 0\label{eq:kkt_III_k5}.
\end{flalign} 
From Lemma \ref{lemma:u_v_0_same}, we see that $u_i$ and $v_i$ need to be 0 or greater than 0 simultaneously. First we assume $u_i = v_i = 0$ and $\lambda_0 >0$, which indicates that $\kappa_i >0$ and $z_i >0$. Then from (\ref{eq:kkt_III_k1}) and (\ref{eq:kkt_III_k2}), we have
%\begin{flalign}\label{eq:kappa_z_rel}
%q_i z_i^2 (\xi_i \lambda_0 - v_i) = p_i \kappa_i^2 (c_0 \lambda_0 - u_i).
%\end{flalign}
%Based on (\ref{eq:kkt_III_k4}) and (\ref{eq:kkt_III_k5}), we can further simplify (\ref{eq:kappa_z_rel}) as
$
z_i = \omega_i \kappa_i,
$ 
where $\omega_i = \sqrt{c_0 p_i/(q_i \xi_i)}$. %From (\ref{eq:k_z_relation}), we see that the optimal amplifier gain for one active secondary user is a scalar multiplication of the number of samples. 
Plugging this into  (\ref{eq:pro_formu_g_k_1}), the original optimization problem becomes
\begin{flalign}\label{eq:pro_formu_g_k_2}
\underset{\boldsymbol{\kappa} }{\max}  ~~~&    \textstyle  \sum_{i\in \mathcal{I}} s_{1i} \kappa_i
\nonumber  \\  \textrm{s.t.}  ~~~&    \textstyle   \sum_{i\in \mathcal{I}}  s_{2i} \kappa_i \leq   \bar{\mathcal{C}}, ~\kappa_i \geq 0,~\forall i\in \mathcal{I},
\end{flalign}
where $\mathcal{I} = \{i|\kappa_i > 0, z_i >0\}$, $s_{1i} = (q_i + p_i/\omega_i)^{-1}$ and $s_{2i} = c_0 + \xi_i \omega_i$. Since adding zero will not change the objective function and constraints in (\ref{eq:pro_formu_g_k_2}), we can rewrite (\ref{eq:pro_formu_g_k_2}) as
\begin{flalign}\label{eq:pro_formu_g_k_3}
\underset{\boldsymbol{\kappa} }{\max}  ~~~&    \boldsymbol{s}_1^{\texttt{T}}\boldsymbol{\kappa}
\nonumber  \\  \textrm{s.t.}  ~~~&    \boldsymbol{s}_2^{\texttt{T}}\boldsymbol{\kappa} \leq   \bar{\mathcal{C}} , ~\boldsymbol{\kappa}\succeq  \boldsymbol{0}.
\end{flalign}
This is a classic linear optimization problem; thus we can solve this easily. Since the vertices of the polyhedron are the basic feasible solution for linear optimization problem \cite{Linear_Opt}, the optimal solution of (\ref{eq:pro_formu_g_k_3}) suggests that only one of $\kappa_i$ is non-zero while others are all zero. Let us define 
$\rho_i={s_{1i}}/{s_{2i}}$ and assume $\rho_1 \geq \rho_2 \geq \cdots \geq \rho_{n}$. Then, the optimal solution of $(\boldsymbol{\kappa}, \boldsymbol{g})$ can be given in Theorem \ref{thm:caseIII_sol}. This completes the proof.
 \end{proof}

% 
%-----------------------------------------
\subsection{Solution for Set $\mathcal{S}_0$}\label{proof:solu}
%-----------------------------------------

Here we follow the analysis in \cite{Cui_2007} to find $\mathcal{S}_0$. From (\ref{eq:k_temp}), we see that in order to guarantee $z_i \geq 0$, we need to have $\sqrt{\lambda_0} \leq \sqrt{a_i/(b_i \xi_i)}$, which indicates $f(i) < 1$ for some $i$s. Then, the problem can be stated as: given $\beta_1 \leq \beta_2 \leq \cdots \leq \beta_{n}$, $f(i_\mathcal{S}) < 1$ and $f(i_\mathcal{S}+1) \geq 1$, we have
\begin{enumerate}
	\item $f(i)$ is an increasing function of $i$ for $i \leq i_{\mathcal{S}}$;
	\item $f(i) \geq 1$ for $i > i_{\mathcal{S}}$.
\end{enumerate}
\begin{proof}
It is straightforward to show that $f(1) <1$. This indicates that $\mathcal{S}_0 \neq \emptyset $ and thus there exist feasible solutions for $\boldsymbol{z}$. When $i > 1$, we have
 \begin{flalign*}
f(i+1) ~=~& \textstyle \frac{\beta_{i+1}\sum_{j =1}^{i} \sqrt{a_j b_j \xi_j} + b_{i+1}\xi_{i+1}}{\sum_{j =1}^{i}  b_j\xi_j +  \mathcal{P}_{\textrm{tot}}+ b_{i+1} \xi_{i+1}} \\
~ \geq ~&  \textstyle \frac{\beta_{i}\sum_{j =1}^{i} \sqrt{a_j b_j\xi_j} + b_{i+1} \xi_{i+1}}{\sum_{j =1}^{i}  b_j\xi_j +  \mathcal{P}_{\textrm{tot}}+ b_{i+1} \xi_{i+1}} \\
~\mathop{\geq}^{(a)} ~ & 
\left\{
\begin{array}{cl}
f(i), &   ~~i < i_{\mathcal{S}}\\
1,  &  ~~i > i_{\mathcal{S}}.
\end{array} \right.
\end{flalign*} 
The first inequality in $(a)$ is valid since when $x/y < 1$, we have $(x+c)/(y+c) \geq x/y$, where $x,y,c>0$. Then, we see that $f(i)$ is an increasing function of $i$ for $i \leq i_{\mathcal{S}}$. The second inequality in $(a)$ is valid since when $x/y \geq 1$, we have $(x+c)/(y+c) \geq 1$. This indicates that when $f(i) \geq 1$, $f(i+1) \geq 1$ for $i > i_{\mathcal{S}}$. This completes the proof.
\end{proof}
%\vspace{-0.15in}

%-----------------------------------------
\subsection{Proof of Lemma \ref{lemma:asym_k_diff}}
%-----------------------------------------

\begin{proof}
From Section \ref{sec:caseI}, we see that
\[
f(2) = \frac{ b_1 \xi_1 + b_2\xi_2 + (\beta_2 - \beta_1)\sqrt{a_1 b_1 \xi_1}}{b_1 \xi_1 + b_2\xi_2 +\mathcal{P}_{\textrm{tot}}}.
\]
As $\tilde{\kappa}_i  \rightarrow \infty$, we have $a_1, b_1 \rightarrow \infty$. This implies $(\beta_2 - \beta_1)\sqrt{a_1 b_1 \xi_1} > \mathcal{P}_{\textrm{tot}}$. With $\beta_2 > \beta_1$, $f(2) > 1$ and $\mathcal{S}_0 = \{1\}$. %This indicates that the optimal amplifier gains are $g_{1}^2 = \mathcal{P}_{\textrm{tot}}/\xi_1$ and $g_{i} = 0~(i > 1)$. 
Then, $\textrm{P}_{e}^{(\textrm{opt})}({\kappa}_{\infty}) = Q\left( \frac{1}{2 \tilde{\sigma}_v}( \mathcal{P}_{\textrm{tot}} \max\{\theta_i\} )^{1/2} \right)$, where $\theta_i = \gamma_i^2 |h_i|^2/\xi_i$. Furthermore, $\textrm{P}_{e}^{(\textrm{sub})}({\kappa}_{\infty}) = Q\left( \frac{1}{2 \tilde{\sigma}_v}( \mathcal{P}_{\textrm{tot}} \|\boldsymbol{\theta}\|^2/(\boldsymbol{1}^{\texttt{T}}\boldsymbol{\theta}) )^{1/2} \right)$ and $\textrm{P}_{e}^{(\textrm{equ})}({\kappa}_{\infty}) = Q\left( \frac{1}{2 \tilde{\sigma}_v}( \mathcal{P}_{\textrm{tot}} (\boldsymbol{1}^{\texttt{T}}\boldsymbol{\theta})/n )^{1/2} \right)$. 
Since $ \max\{\theta_i\} \cdot (\boldsymbol{1}^{\texttt{T}}\boldsymbol{\theta}) \geq \|\boldsymbol{\theta}\|^2
$ and $n\|\boldsymbol{\theta}\|^2 \geq (\boldsymbol{1}^{\texttt{T}}\boldsymbol{\theta})^2$, we can conclude the proof.
\end{proof}
%\vspace{-0.15in}
%-----------------------------------------
\subsection{Solution for Set $\mathcal{S}_1$}\label{proof:solu_s1}
%-----------------------------------------

Similar to the solution for $\mathcal{S}_0$, we need to show that: given $\tilde{\beta}_1 \leq \tilde{\beta}_2 \leq \cdots \leq \tilde{\beta}_{n}$, $\tilde{f}(i_\mathcal{S}) < 1$ and $\tilde{f}(i_\mathcal{S}+1) \geq 1$, we have
%\begin{enumerate}
%	\item $\tilde{f}(i) < 1$ for $i \leq i_{\mathcal{S}}$;
%	\item $\tilde{f}(i) \geq 1$ for $i_{\mathcal{S}} < i \leq \big \lfloor\frac{\mathcal{P}_{\textrm{tot}}}{\mathcal{P}_{\max}} \big\rfloor$.
%\end{enumerate}
\\$~~~~$\emph{Property F.1}: $\tilde{f}(i) < 1$ for $i \leq i_{\mathcal{S}}$; \\
	$~~~~$\emph{Property F.2}: $\tilde{f}(i) \geq 1$ for $i_{\mathcal{S}} < i \leq \big \lfloor\frac{\mathcal{P}_{\textrm{tot}}}{\mathcal{P}_{\max}} \big\rfloor$.
\begin{proof}
To prove Property F.1, we consider 4 cases which depend on the values of $\tilde{\beta}_i$ and $\beta_i$: 1) $\tilde{\mathcal{S}}_{i} = \tilde{\mathcal{S}}_{i-1} \cup \tilde{\mathcal{S}}'_{i} \setminus \{i \}$, 2) $\tilde{\mathcal{S}}_{i} = \tilde{\mathcal{S}}_{i-1} \cup \tilde{\mathcal{S}}'_{i}  $, 3) $\tilde{\mathcal{S}}_{i} = \tilde{\mathcal{S}}_{i-1}  \setminus \{i \}$, 4) $\tilde{\mathcal{S}}_{i} = \tilde{\mathcal{S}}_{i-1} $, where $\tilde{\mathcal{S}}'_{i} = \{m|\tilde{\beta}_{i-1}< \beta_m < \tilde{\beta}_{i},~i<m\leq n\}$. Now we start with case 1). In case 1), we have $\beta_{i} < \tilde{\beta}_{i-1}$ and $\tilde{\mathcal{S}}'_{i} \neq \emptyset$. Furthermore, we note that 
 \begin{flalign*}
~&\textstyle \tilde{\beta}_{i-1} \sum_{m \in \tilde{\mathcal{S}}_{i-1}} \sqrt{a_m b_m\xi_m} \\ \leq~& \textstyle \tilde{\beta}_{i} \Big(\sum_{m \in \tilde{\mathcal{S}}_{i}} \sqrt{a_m b_m\xi_m} + \sqrt{a_{i}b_i \xi_i }-   \sum_{m \in \tilde{\mathcal{S}}'_{i}} \sqrt{a_m b_m\xi_m}\, \Big)\\
\leq~& \textstyle \tilde{\beta}_{i} \sum_{m \in \tilde{\mathcal{S}}_{i}} \sqrt{a_m b_m\xi_m} + (\mathcal{P}_{\max} + b_i \xi_i) - \sum_{m \in \tilde{\mathcal{S}}'_{i}} b_m\xi_m.
\end{flalign*} 
The last inequality is valid since when $m \in \tilde{\mathcal{S}}'_{i}$, $\beta_m <\tilde{\beta}_{i}$, we have \[\tilde{\beta}_{i} \sum_{m \in \tilde{\mathcal{S}}'_{i}} \sqrt{a_m b_m\xi_m} \geq \sum_{m \in \tilde{\mathcal{S}}'_{i}} \beta_m\sqrt{a_m b_m\xi_m}  = \sum_{m \in \tilde{\mathcal{S}}'_{i}} b_m\xi_m. \] 
After some manipulations, when $\tilde{f}(i) < 1,~\forall i < i_{\mathcal{S}}$,
% \begin{flalign*}
%\tilde{f}(i-1) ~=~& \frac{\tilde{\beta}_{i-1}\sum_{m \in \tilde{\mathcal{S}}_{i-1}} \sqrt{a_m b_m\xi_m}}{\mathcal{P}_{\textrm{tot}} - ({i-1}) \mathcal{P}_{\max} + \sum_{m \in \tilde{\mathcal{S}}_{i-1}} b_m\xi_m} \\
%~\leq~& \frac{\tilde{\beta}_{i}\sum_{m \in \tilde{\mathcal{S}}_i} \sqrt{a_m b_m\xi_m} +c_1}{(\mathcal{P}_{\textrm{tot}} - i \mathcal{P}_{\max} + \sum_{m \in \tilde{\mathcal{S}}_i} b_m\xi_m) + c_1} \\
%~<~& 1,
%\end{flalign*} 
 \begin{flalign*}
\tilde{f}(i-1)~\leq~   \frac{\tilde{\beta}_{i}\sum_{m \in \tilde{\mathcal{S}}_i} \sqrt{a_m b_m\xi_m} +c_1}{(\mathcal{P}_{\textrm{tot}} - i \mathcal{P}_{\max} + \sum_{m \in \tilde{\mathcal{S}}_i} b_m\xi_m) + c_1} ~<~ 1,
\end{flalign*} 
where $c_1 = (\mathcal{P}_{\max} + b_i \xi_i) - \sum_{m \in \tilde{\mathcal{S}}'_{i}} b_m\xi_m$. The last inequality is valid because when $x/y < 1$, we have $(x+c_1)/(y+c_1) < 1$, where $x,y > 0$ and $c_1 > -x$. Similarly, we see that for other three cases, we also have $\tilde{f}(i-1) < 1$.

Now let us prove Property F.2.  Similar to Property F.1, we have
 
 \begin{flalign*}
\tilde{f}(i+1) ~\geq~  \frac{\tilde{\beta}_{i}\sum_{m \in \tilde{\mathcal{S}}_i} \sqrt{a_m b_m\xi_m} -c_2}{(\mathcal{P}_{\textrm{tot}} - i \mathcal{P}_{\max} + \sum_{m \in \tilde{\mathcal{S}}_i} b_m\xi_m) - c_2} 
~\geq~ 1.
\end{flalign*} 
where $c_2 = (\mathcal{P}_{\max} + b_{i+1} \xi_{i+1}) - \sum_{m \in \tilde{\mathcal{S}}'_{i+1}} b_m\xi_m$. The last inequality is valid because when $x/y \geq 1$, we have $(x-c_2)/(y-c_2) \geq 1$, where $x,y > 0$ and $c_2 < y$. Similarly, we see that for other three cases, we have $\tilde{f}(i+1) \geq 1$. This completes the proof.
\end{proof}
 %\vspace{-0.15in}

 \bibliographystyle{unsrt}
\bibliography{IEEEabrv,Spectrum_Sensing_All}

\end{document}